\def\BibTeX{{\rm B\kern-.05em{\sc i\kern-.025em b}\kern-.08em
    T\kern-.1667em\lower.7ex\hbox{E}\kern-.125emX}}
\newtheorem{theorem}{Theorem}
\newtheorem{definition}{Definition}
\newtheorem{lemma}{Lemma}
\newtheorem{remark}{Remark}
\begin{document}







\title{Throughput Region of Spatially Correlated Interference Packet Networks}

\author{Alireza~Vahid,~\IEEEmembership{Member,~IEEE,}
Robert~Calderbank,~\IEEEmembership{Fellow,~IEEE}
\thanks{Alireza Vahid is with the Department of Electrical Engineering, University of Colorado Denver, Denver, CO, USA. Email: {\sffamily alireza.vahid@ucdenver.edu}.}
\thanks{Robert Calderbank is with departments of Electrical and Computer Engineering, Mathematics, and Computer Science, Duke University, Durham, NC, USA. Email: {\sffamily robert.calderbank@duke.edu}.}
\thanks{Preliminary results of this work were presented at the 2016 IEEE International Symposium on Information Theory (ISIT)~\cite{vahid2016does}. A shortened version of this paper will appear in ISIT 2018~\cite{vahidARQ}.}
\thanks{Copyright (c) 2017 IEEE. Personal use of this material is permitted. However, permission to use this material for any other purposes must be obtained from the IEEE by sending a request to {\sffamily pubs-permissions@ieee.org}.}
}

\maketitle


\begin{abstract}
In multi-user wireless packet networks interference, typically modeled as packet collision, is the throughput bottleneck. Users become aware of the interference pattern via feedback and use this information for contention resolution and for packet retransmission. Conventional random access protocols interrupt communication to resolve contention which reduces network throughput and increases latency and power consumption. In this work we take a different approach and we develop opportunistic random access protocols rather than pursuing conventional methods. We allow wireless nodes to communicate without interruption and to observe the interference pattern. We then use this interference pattern knowledge and channel statistics to counter the negative impact of interference. We prove the optimality of our protocols using an extremal rank-ratio inequality. An important part of our contributions is the integration of spatial correlation in our assumptions and results. We identify spatial correlation regimes in which inherently outdated feedback becomes as good as idealized instantaneous feedback, and correlation regimes in which feedback does not provide any throughput gain. To better illustrate the results, and as an intermediate step, we characterize the capacity region of finite-field spatially correlated interference channels with delayed channel state information at the transmitters.
\end{abstract}


\begin{IEEEkeywords}
Wireless packet networks, random-access protocol, interference management, spatial correlation, correlation across users, network throughput, erasure interference channel, delayed CSIT, capacity region.
\end{IEEEkeywords}


\section{Introduction}
\label{Section:Introduction}

Recent years have seen a dramatic increase in wireless data traffic thanks to new paradigms such as heterogeneous networks and device-to-device communications. But perhaps more important than the increase in data traffic, is the fact that the character of wireless communications is changing. We are experiencing a transition from a wireless world with human users to one with machines. In this new wireless world we need to accommodate coexistence over the same band of a massive number of communicating machines. However, this task is quite challenging as it is hard to predict when these machines wish to communicate. Moreover, the interference environment constantly changes and at the time of communication, we may not know what this environment looks like. These challenges demand new solutions beyond the conventional medium-access protocols.

Conventional random-access protocols such as ALOHA~\cite{abramson1970aloha} and ALOHA-type mechanisms interrupt communication to resolve contention. This interruption will result in long silent periods in massive machine-type communication (mMTC) which will reduce network throughput and will increase latency. In low-power wide-area networks (LP-WANs), the interruption will result in more retransmission and increased power consumption. Both these scenarios are undesirable. One approach to mitigate this problem is to add more and more signaling in order to allow users to come to an agreement and share the medium, but this approach comes at the cost of further payload reduction.

In this work we take a different approach and we develop {\it open-loop opportunistic communication protocols}. We allow wireless nodes to communicate without interruption and to observe the interference pattern. We then use this interference pattern knowledge and channel statistics to counter the negative impact of interference. The result of our approach is better use of resources (\emph{e.g.}, time and energy) and higher network throughput. We consider two problems in this work. More precisely, we consider a wireless packet network with multiple transmitter-receiver pairs and within this network, we focus on two nearby pairs. We develop an opportunistic random-access or communication protocol for this network that updates the status of previously communicated packets based on the observed interference pattern. These packets are then combined and retransmitted efficiently through the medium. We also prove the optimality of our proposed scheme by developing an extremal rank-ratio inequality. 

\begin{figure}[ht]
\centering
\subfigure[]{\includegraphics[width = \columnwidth]{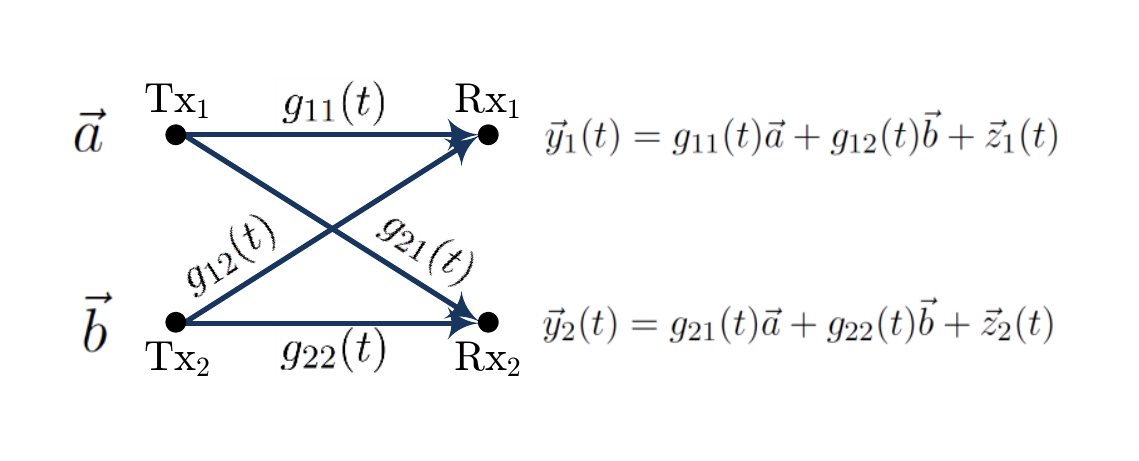}}
\subfigure[]{\includegraphics[width = \columnwidth]{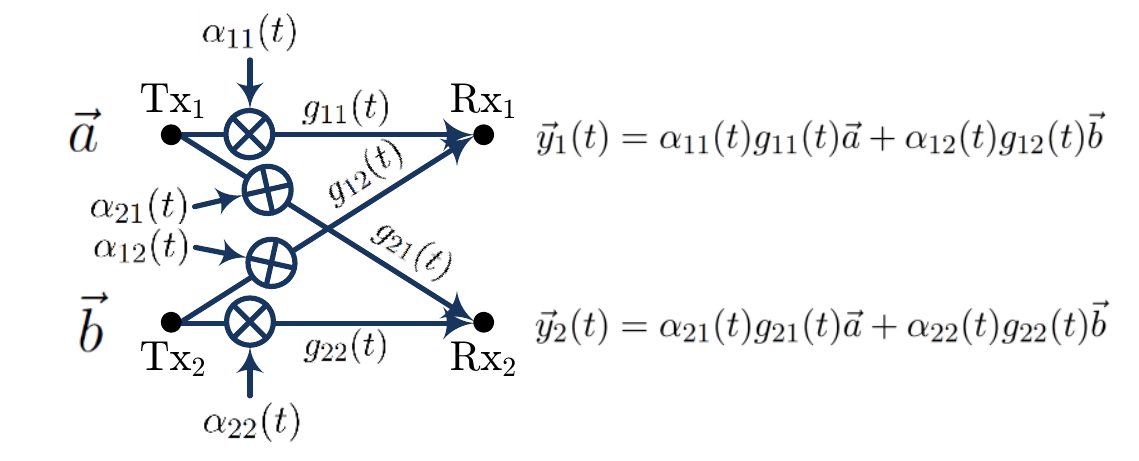}}
\caption{At time $t$, ${\sf Tx}_1$ and ${\sf Tx}_2$ communicate data packets $\vec{a}$ and $\vec{b}$ respectively: (a) $\vec{z}_1(t)$ and $\vec{z}_2(t)$ capture the ambient interference plus noise; (b) $\alpha_{ji}(t)$'s are the shadowing coefficients. \label{Fig:IC-Gaussian}}
\end{figure}

We adopt the physical layer model for wireless packet networks introduced in~\cite{AlirezaInfocom2014} that allows the flexibility of storing the analog received signals at the receivers so that they can be utilized for decoding packets in the future. In this model the impact of interference from nearby users is captured by binary shadowing coefficients as described below. Suppose transmitters ${\sf Tx}_1$ and ${\sf Tx}_2$ communicate data packets $\vec{a}$ and $\vec{b}$ respectively at time $t$ as in Fig.~\ref{Fig:IC-Gaussian}(a). Then receiver ${\sf Rx}_1$ obtains
\begin{align}
\vec{y}_1(t) = g_{11}(t) \vec{a} + g_{12}(t) \vec{b} + \vec{z}_1(t),
\end{align}
where $g_{11}(t)$ and  $g_{12}(t)$ are real-valued channel gains and  $\vec{z}_1(t)$ is the ambient interference plus noise. In the physical layer model of~\cite{AlirezaInfocom2014} this received signal is represented by
\begin{align}
\vec{y}_1(t) = \alpha_{11}(t) g_{11}(t) \vec{a} + \alpha_{12}(t) g_{12}(t) \vec{b},
\end{align}
where the shadowing coefficients $\alpha_{11}(t)$ and $\alpha_{12}(t)$ are in the binary field and depend on the signal-to-interference-plus-noise ratios (SINRs) of different links and determine the interference pattern. This abstraction is depicted in Fig.~\ref{Fig:IC-Gaussian}(b) and is closely related to Erasure Interference Channel (IC) model~\cite{AlirezaBFICDelayed,vahid2016two} which in turn is a generalization of Erasure Broadcast Channel (BC) model~\cite{jolfaei1993new,dana2005capacity,georgiadis2009broadcast,sundararajan2008arq,wang2012capacity,linbursty} to capture interference.

In this work, we assume wireless nodes learn the binary quadruple 
\begin{align}
\label{Eq:Coefficients}
\alpha(t) = \left\{ \alpha_{11}(t), \alpha_{12}(t), \alpha_{21}(t), \alpha_{22}(t) \right\}
\end{align}
with unit delay. This model is commonly referred to as delayed channel state information at the transmitters or delayed CSIT for short. 

\begin{figure}[ht]
\centering
\includegraphics[height = 2 in]{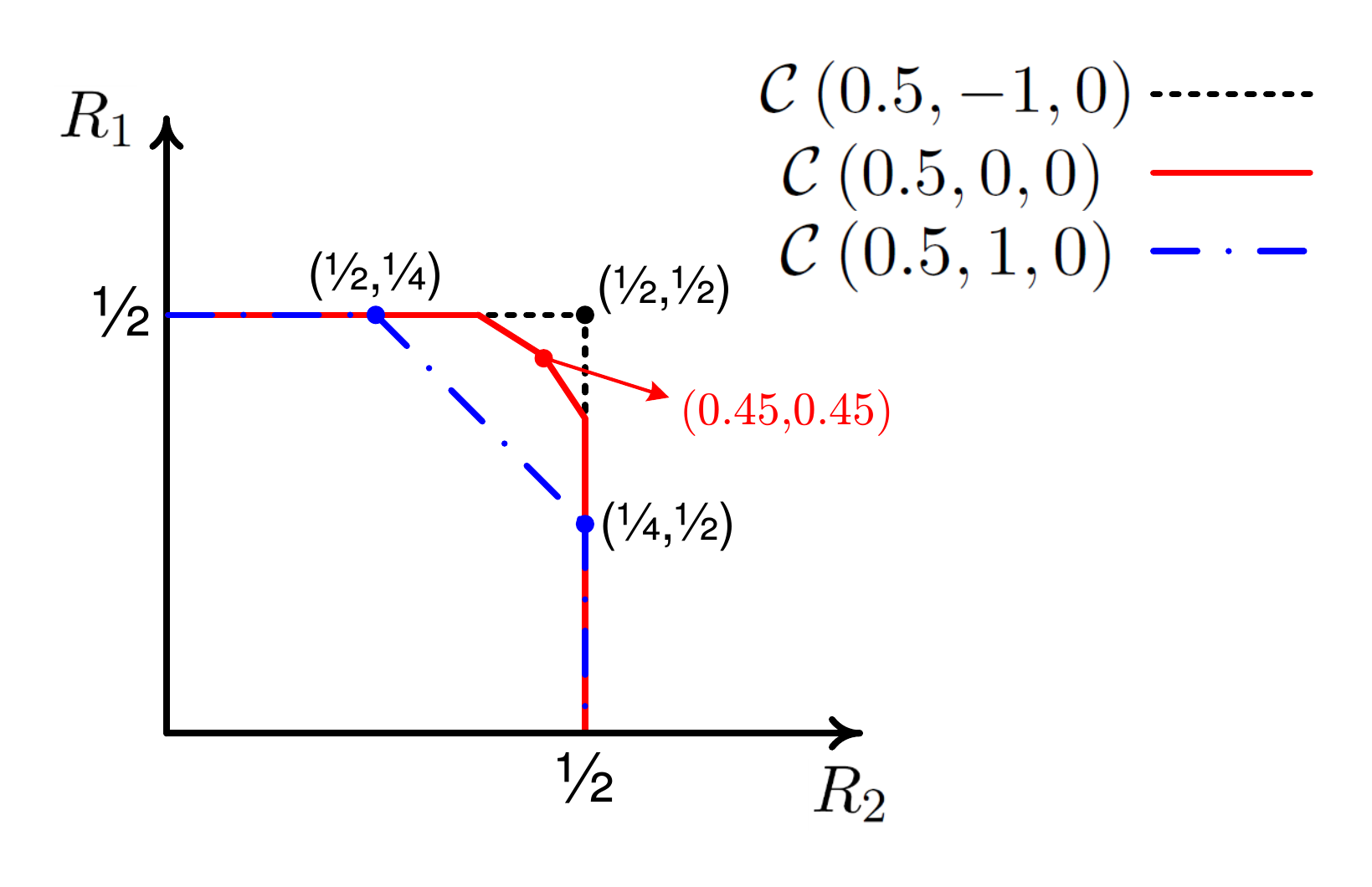}
\caption{Throughput region for $\mathcal{B}(0.5)$ shadowing coefficients and three different correlation coefficient pairs.\label{Fig:CapacityHalf}}
\end{figure}

In most prior work fading coefficients are assumed to be independently and identically distributed across {\it time} and {\it space}. There are recent results on temporal correlation in wireless networks but there is very little known about spatial correlation. Here, we incorporate spatial correlation into our setting. We assume that there is a certain spatial correlation between these binary shadowing coefficients and that this knowledge is available to all nodes as side information. We show that spatial correlation can greatly affect the throughput region of wireless packet networks: spatial correlation on the one hand can take away any potential gain of delayed interference pattern knowledge and on the other hand, it can help us perform as well as having instantaneous knowledge of interference pattern. To see this dichotomy, we focus on the case in which all shadowing coefficients in (\ref{Eq:Coefficients}) are distributed as Bernoulli $\mathcal{B}(0.5)$ random variables and assume that the coefficients corresponding to the wireless links connected to each receiver are distributed independently. When the two channels connected to each transmitter are fully correlated (\emph{i.e.} correlation coefficient of $1$), the throughput region coincides with the one where transmitters do not have any access to interference pattern knowledge as shown in Fig.~\ref{Fig:CapacityHalf}. At the other extreme point where the two channels connected to each transmitter have a correlation coefficient of $-1$, the gain of delayed interference pattern is accentuated and the throughput region is as good as the throughput region of a network in which a genie informs wireless nodes of the interference pattern before it even happens. 


To derive the outer-bounds, a new set of inequalities is needed. We first derive the outer-bounds on the capacity region of spatially correlated finite-field interference channels. To do so, we introduce an extremal entropy inequality for finite-field channels. This inequality captures the ability of each transmitter to favor one receiver over the other in terms of available entropy when channels are correlated.  We use genie-aided arguments and apply our extremal entropy inequality to derive the outer-bounds. We then focus on deriving the outer-bounds on the throughput region of spatially correlated packet networks. Similar to the finite-field problem, we develop an extremal rank-ratio inequality. We show that this inequality is tight and can be achieved in the context of packet networks, and we highlight its similarities to the extremal entropy inequality. We again use genie-aided arguments and apply our extremal rank-ratio inequality to derive the outer-bounds. The slope of these outer-bounds is determined by the correlation coefficient at each transmitter. On the receiver side, correlation coefficients determine the size of the available subspace at each receiver. In other words, spatial correlation at the transmitter side defines the shape of the throughput region while spatial correlation at the receiver side determines its size. 

To achieve the outer-bounds, we develop an opportunistic communication that updates the status of each transmitted packet into three queues: 1) delivered packets, 2) packets that arrived (and potentially interfered) at both receivers; and 3)  packets that arrived (and potentially interfered) at the unintended receivers. The protocol then combines the packets in the latter two queues taking into account the spatial correlation structure of the network. These combined packets will be transmitted efficiently through the medium and upon completion of the communication protocol, each receiver will have sufficient number of equations to recover its intended packets.  

As mentioned above, there is a large body of work on wireless networks with delayed knowledge of the interference pattern or the channel state information. This delayed knowledge was used in~\cite{jolfaei1993new} to create transmitted signals that are simultaneously useful for multiple users in a broadcast channel. These ideas were then extended to different wireless networks, including erasure broadcast channels~\cite{georgiadis2009broadcast,wang2012capacity} leading to determination of the Degrees of Freedom (DoF) region of multiple-input single-output (MISO) Gaussian BCs~\cite{maddah2012completely}. Approximate capacity of MISO BCs was presented in~\cite{vahid2016approximate}. The DoF region of multi-antenna multi-user Gaussian interference channels and X channels has been also considered~\cite{abdoli2013degrees,GhasemiX1,vaze2012degrees,Jafar_Retrospective,tandon2013degrees,nazer2012ergodic,lashgari2014linear,abdoli2014layered,vahid2015informational,vahidXISIT}, as well as the capacity region of Erasure ICs with delayed knowledge~\cite{AlirezaBFICDelayed,vahid2016two}.   

Other results consider more realistic settings in which wireless links exhibit some sort of correlation. Temporal correlation has been studied in several papers~\cite{kobayashi2012degrees,yang2013degrees,de2016optimal,gou2012optimal,yi2014degrees,chen2014symmetric} with the idea being that correlation across time allows transmitters to better estimate what will happen next, and to adjust their transmission strategies accordingly. There are also some results that consider spatial correlation in wireless networks. Point-to-point multiple-input multiple-output (MIMO) with spatial correlation are studied in several papers, \emph{e.g.},~\cite{gesbert2002outdoor,goldsmith2003capacity,love2008overview}. In the context of broadcast channels, there are several results that propose transmission strategies that take into account spatial correlation of wireless links~\cite{raghavan2013statistical,choi2013limited,dai2015transmit,clerckx2015space,zhang2017spatially}. In BCs all messages are available to one node, and thus decision making is centralized. However, in interference networks this is no longer the case, and the decentralized nature of the problem introduces new challenges. 

In the context of interference channels, a transmission strategy was developed in~\cite{nosrat2011mimo} for MIMO ICs with {\it instantaneous} channel knowledge and spatial correlation. In~\cite{raghavan2010statistical}, the performance of treating-interference-as-noise is studied, and in~\cite{filippou2013optimal}, a beamforming strategy is developed for spatially correlated wireless networks. However, a fundamental understanding of the capacity region of spatially correlated interference channels is missing and is the focus of this work. We characterize the capacity region of spatially correlated finite-field interference channels and the throughput region of spatially correlated interference packet networks.

It is also worth noting a separate but related line of work on spatial correlation. In~\cite{duan2016state,duan2013state,ghasemi2014state}, authors consider a class of interference channels in which the received signals at different receivers are corrupted by some ambient signals. Such results do not consider correlated channel realization, rather they assume correlation among the ambient signals. 

The rest of the paper is organized as follows. In Section~\ref{Section:Problem} we formulate our problem and justify the physical layer model. In Section~\ref{Section:Main} we present our main results and provide some insights. Sections~\ref{Section:Converse} and~\ref{Section:Achievability} are dedicated to the proof of the main results. We describe connections to two related directions in Section~\ref{Section:Discussion}, and Section~\ref{Section:Conclusion} concludes the paper.


\section{Problem Formulation}
\label{Section:Problem}

We quantify the impact of spatial correlation on the capacity region of finite-field interference channels and on the throughput region of wireless packet networks. In this section we define each model, and we introduce the notations we use in this paper.

\subsection{Finite-Field Interference Channels}

In the finite-field model, the channel gain from transmitter ${\sf Tx}_i$ to receiver ${\sf Rx}_j$ at time $t$ is in the binary field and is denoted by $\alpha_{ji}(t)$, $i,j \in \{1,2\}$. These channel gains are distributed as Bernoulli random variables with parameter $p$, \emph{i.e.} $\alpha_{ji}(t) \overset{d}\sim \mathcal{B}(p)$. The channel state information at time instant $t$ is denoted by
\begin{align}
\alpha(t) \overset{\triangle}= \left \{ \alpha_{11}(t), \alpha_{12}(t), \alpha_{21}(t), \alpha_{22}(t) \right \}.
\end{align} 
To simplify the equations in this paper, we set 
\begin{align}
\label{Eq:Notationq}
q \overset{\triangle}=1-p, \qquad \text{and} \qquad \bar{i} = 3 - i.
\end{align}

\begin{figure}[ht]
\centering
\includegraphics[height = 3.5cm]{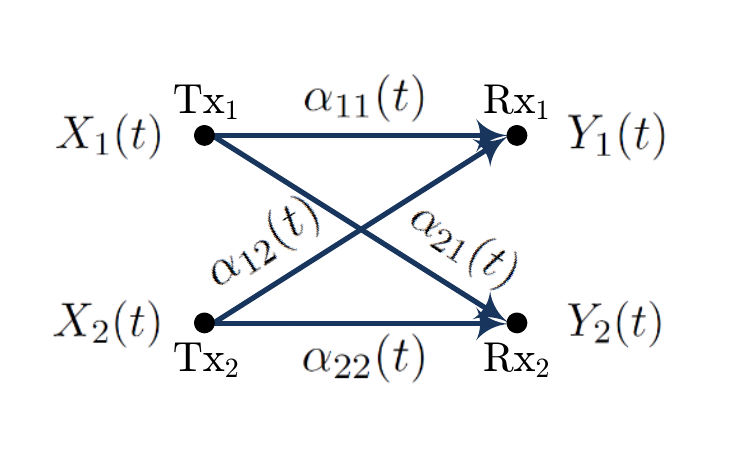}
\caption{Two-user Finite-Field Interference Channel.\label{Fig:detIC}}
\end{figure}

We assume that the transmitters become aware of the CSI, $\alpha(t)$, with unit delay. Since each receiver only needs to decode its message at the end of the communication block, without loss of generality, we assume that each receiver has instantaneous knowledge of the channel state information.

This channel is illustrated in Fig.~\ref{Fig:detIC} where the input-output relation at time $t$ is given by
\begin{equation} 
Y_i(t) = \alpha_{ii}(t) X_i(t) \oplus \alpha_{i\bar{i}}(t) X_{\bar{i}}(t), \quad i = 1, 2,
\end{equation}
where all algebraic operations are in $\mathbb{F}_2$, $\bar{i} = 3 - i$ as defined in (\ref{Eq:Notationq}), $X_i(t) \in \{ 0, 1\}$ is the transmit signal of ${\sf Tx}_i$ at time $t$, and $Y_i(t) \in \{ 0, 1\}$ is the observation of ${\sf Rx}_i$ at time $t$. We note that the observation of ${\sf Rx}_i$ and the channel state information form the received signal of ${\sf Rx}_i$. 

\begin{remark}
In the finite-field model we focus on the binary coefficients and the reason is as follows. In Section~\ref{Section:Achievability} we show that in order to achieve the capacity, what matters is whether or not transmit signal $X_i(t)$ contributes to the receiver observation $Y_j(t)$. In other words, once the transmit signal shows up at the receiver, the channel coefficient becomes irrelevant. This means even if we consider a larger finite field, all that matters is whether each channel gain is zero or not which is essentially a binary random variable. See Remark~\ref{Remark:BinaryField} at the end of Section~\ref{Section:Example1}  for more details.
\end{remark}

\subsection{Interference Packet Networks}

In this subsection we connect the physical-layer model defined above to a network-layer model. In particular, we consider a wireless packet network in which multiple transmitter-receiver pairs wish to communicate with each other and within this setup, we focus on two nearby transmitter-receiver pairs, namely ${\sf Tx}_1$-${\sf Rx}_1$ and ${\sf Tx}_2$-${\sf Rx}_2$ as depicted in Fig.~\ref{Fig:WLAN}. 

\begin{figure}[ht]
\centering
\includegraphics[height = 2.5 in]{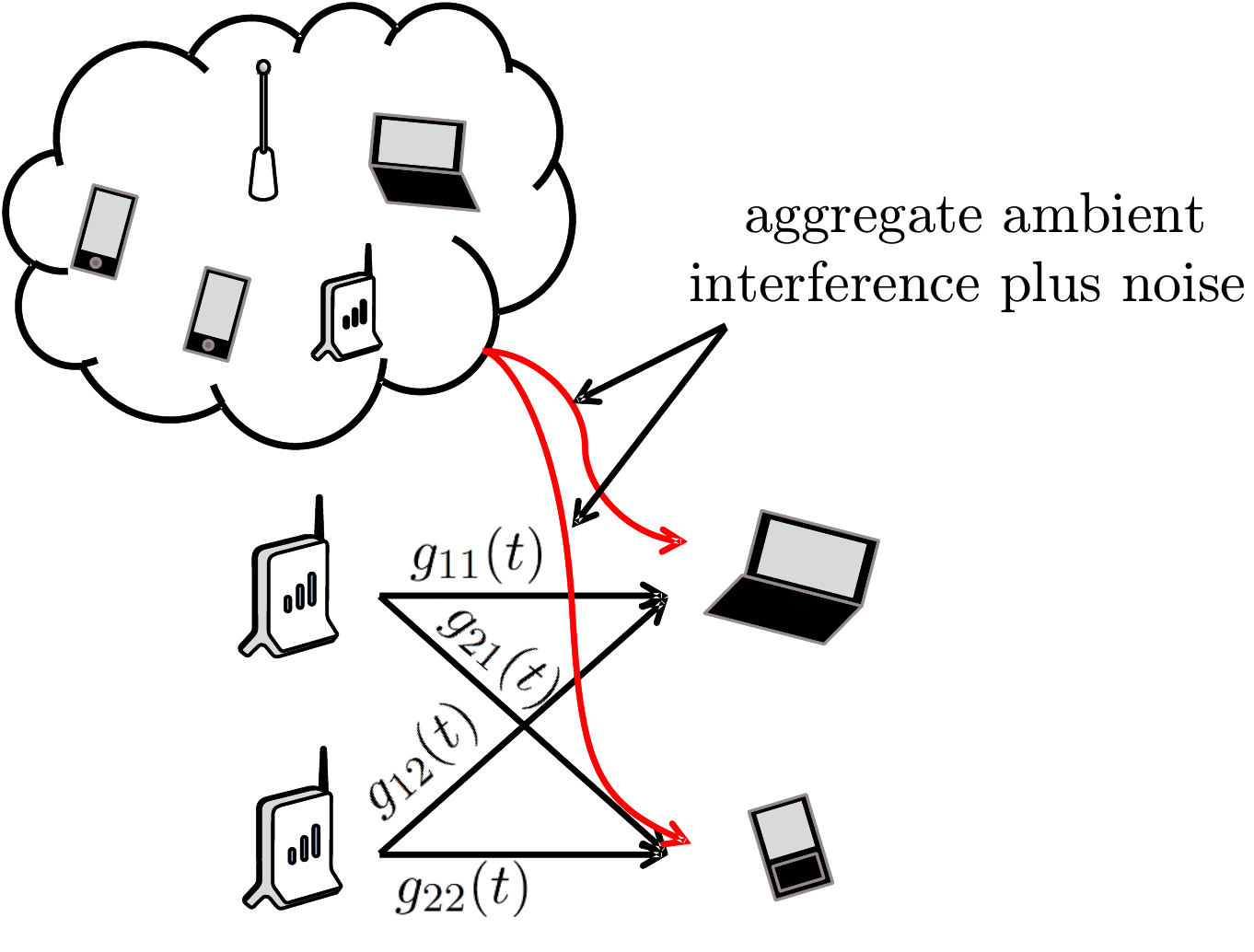}
\caption{A wireless packet network in which multiple transmitter-receiver pairs communicate with each other. We focus on two nearby pairs.\label{Fig:WLAN}}
\end{figure}

We adopt the abstraction for wireless packet networks introduced in~\cite{AlirezaInfocom2014}. In this model transmitter ${\sf Tx}_1$ has $m_1$ packets with corresponding physical layer codewords of length $\tau$ denoted by $\vec{a}_1, \vec{a}_2, \ldots, \vec{a}_{m_1}$ and wishes to communicate them to receiver ${\sf Rx}_1$. Similarly, transmitter ${\sf Tx}_2$ has $m_2$ packets denoted by $\vec{b}_1, \vec{b}_2, \ldots, \vec{b}_{m_2}$ for receiver ${\sf Rx}_2$. It is assumed that the mapping from the packets to their corresponding physical layer codewords is fixed (\emph{e.g.}, LDPC codes, Reed-Solomon codes, etc) and if a codeword is received with a signal-to-interference-plus-noise ratio above $\gamma$, the receiver is able to decode its packet. The signal-to-interference-plus-noise ratio of link $ji$ at ${\sf Rx}_j$, $i,j \in \{ 1, 2\}$, is defined as:
\begin{align}
\mathrm{SINR}_{ji} \overset{\triangle}= 10 \log_{10} \left( \frac{P|g_{ji}|^2}{\mathbb{E}\left[ \vec{z}_j^\top(t) \vec{z}_j(t) \right] + P |g_{j\bar{i}}|^2} \right),
\end{align}
where $P$ is the average transmit power constraint and $\vec{z}_j(t)$ is the additive noise plus the ambient interference (from potential nearby transmitters except for ${\sf Tx}_i$). Furthermore, we define the signal-to-noise ratio (SNR) of link $ji$ at ${\sf Rx}_j$, $i,j \in \{ 1, 2\}$, as:
\begin{align}
\mathrm{SNR}_{ji} \overset{\triangle}= 10 \log_{10} \left( \frac{P|g_{ji}|^2}{\mathbb{E}\left[ \vec{z}_j^\top(t) \vec{z}_j(t) \right]} \right).
\end{align}

\begin{remark}
As mentioned above, a packet can be successfully decoded if the corresponding codeword is received with a signal-to-interference-plus-noise ratio above $\gamma$. In other words, if the SINR is above $\gamma$, then the packet can be decoded by treating-interference-as-noise.  This might sound like we intend to treat interference as noise in our system. However, as we will see in Section~\ref{Section:Achievability}, for the majority of the communication block, the SINR threshold is not met and treating-interference-as-noise would be suboptimal. Instead of discarding the codewords received below the threshold, we keep the ones received with an SNR above $\gamma$ and we treat the received signal as a linear combination of the collided packets. We devise an interference alignment technique that capitalizes on these equations and achieves the outer-bound. The proposed communication scheme significantly outperforms treating-interference-as-noise as discussed in Section~\ref{Section:Main}.  
\end{remark}

\begin{figure}[ht]
\centering
\subfigure[State~1:~$\mathrm{SINR}_{11} \ge \gamma$]{\includegraphics[width=0.49\columnwidth]{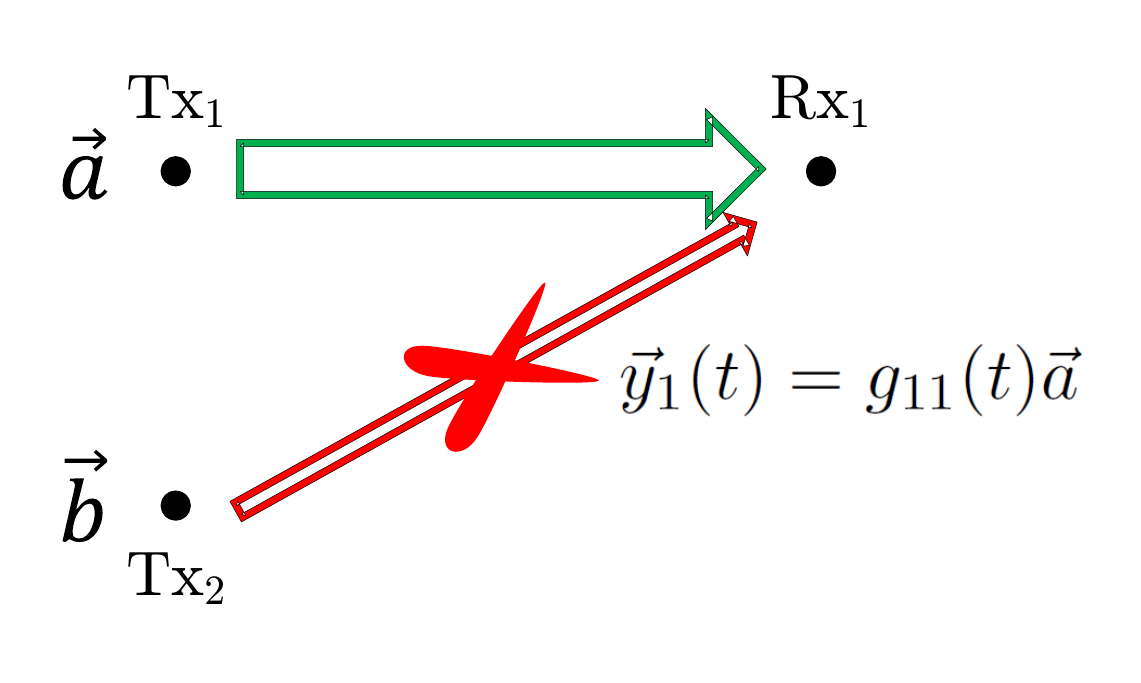}}
\subfigure[State~2:~$\mathrm{SINR}_{12} \ge \gamma$]{\includegraphics[width=0.49\columnwidth]{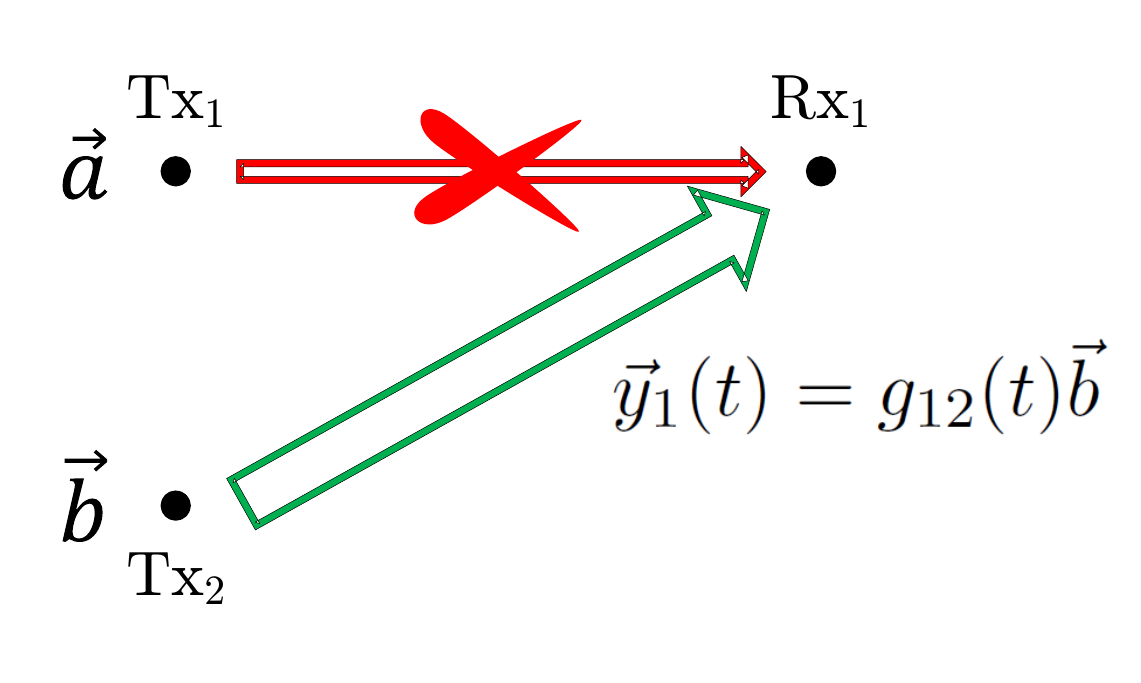}}
\subfigure[State~3:~$\mathrm{SINR}_{1i} < \gamma, \mathrm{SNR}_{1i} \ge \gamma$]{\includegraphics[width=0.49\columnwidth]{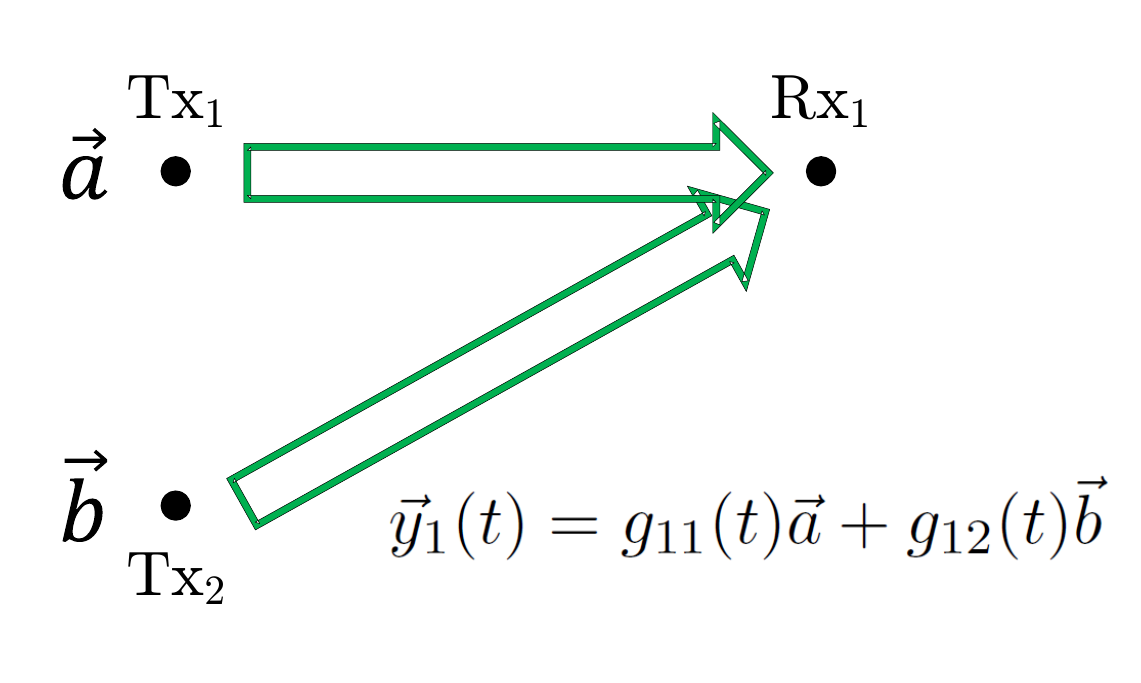}}
\subfigure[State~4]{\includegraphics[width=0.49\columnwidth]{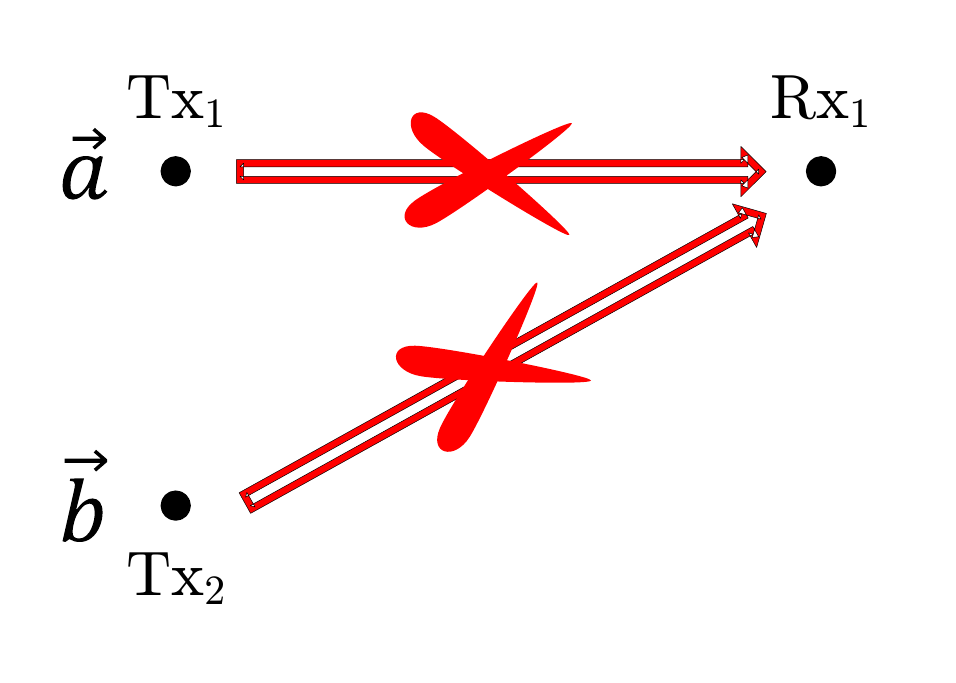}}
\caption{Based on the SINR and SNR values of different links at each time $t$, we have four states.}\label{Fig:OnOff}
\end{figure}

Based on the SINR and the SNR values of different links at each time $t$, we have one of the following states at any of the receivers, say ${\sf Rx}_1$:
\begin{itemize}

\item State~1~($\mathrm{SINR}_{11} \ge \gamma$): In this state the SINR of the desired packet (\emph{i.e.} $\vec{a}$) at ${\sf Rx}_1$ is above the threshold and the packet can be decoded by treating the interference as noise. Fig.~\ref{Fig:OnOff}(a) depicts this scenario. 

\item State~2~($\mathrm{SINR}_{12} \ge \gamma$): Similar to State~1, but in this case the SINR of the interfering packet (\emph{i.e.} $\vec{b}$) at ${\sf Rx}_1$ is above the threshold and ${\sf Rx}_1$ (the unintended receiver in this case) can decode this packet, see Fig.~\ref{Fig:OnOff}(b).

\item State~3~($\mathrm{SINR}_{1i} < \gamma$ but $\mathrm{SNR}_{1i} \ge \gamma$ for $i=1,2$): This state corresponds to the scenario in which the SINRs of both packets ($\vec{a}$ and $\vec{b}$) are below the threshold at ${\sf Rx}_1$ but the individual links are strong (\emph{i.e.} $\mathrm{SNR}_{1i} \ge \gamma$). In this case neither of the packets can be decoded by treating the interference as noise. However, ${\sf Rx}_1$ stores the signal as the weighted linear combination of packets $\vec{a}$ and $\vec{b}$ as depicted in Fig.~\ref{Fig:OnOff}(c).  

\item State~4: In any other scenario, ${\sf Rx}_1$ discards the received signal, see Fig.~\ref{Fig:OnOff}(d).

\end{itemize}

The following abstraction of the physical layer at ${\sf Rx}_1$ captures the four different states:
\begin{align}
\label{eq:alphaij}
\vec{y}_1(t) = \alpha_{11}(t) g_{11}(t) \vec{a} + \alpha_{12}(t) g_{12}(t) \vec{b},
\end{align}
where the shadowing coefficients $\alpha_{ji}(t)$'s are in the binary field, $i,j \in \{ 1, 2\}$. For example in State~3, we have $\left( \alpha_{11}(t) = 1, \alpha_{12}(t) = 1 \right)$ and
\begin{align}
\label{eq:case3}
\vec{y}_1(t) = g_{11}(t) \vec{a} + g_{12}(t) \vec{b}.
\end{align}
Similarly, $\alpha_{21}(t)$ and $\alpha_{22}(t)$ are used for ${\sf Rx}_2$. 

As in the finite-field model, the channel state at time instant $t$ is represented by quadruple 
\begin{align}
\alpha(t) = \left \{ \alpha_{11}(t), \alpha_{12}(t), \alpha_{21}(t), \alpha_{22}(t) \right \}.
\end{align}
We assume that $\alpha_{ji}(t)$ follows a Bernoulli distribution $\mathcal{B}\left( p \right)$ and that at time $t$, each transmitter knows $\alpha^{t-1} = \left( \alpha(\ell) \right)_{\ell = 1}^{t-1}$ and each receiver has access to $\alpha^{t} = \left( \alpha(\ell) \right)_{\ell = 1}^{t}$.  

\subsection{Spatial Correlation}

In general, $\alpha_{ji}(t)$'s are correlated across time and space. As mentioned in the introduction, correlation across time allows us to predict the future and improve the throughput using this prediction. On the other hand, there is little known about spatial correlation and that is the focus of the current paper. Since we study the impact of spatial correlation, in this work we assume that $\alpha_{ji}(t)$'s are distributed independently across time and are drawn from the same joint distribution at each time. To capture spatial correlation, we assume a symmetric setting in which the shadowing coefficients corresponding to the links connected to transmitter ${\sf Tx}_i$ have a correlation coefficient $\rho_{\sf Tx}$, \emph{i.e.}
\begin{align}
\label{Eq:rhoTx}
\rho_{\sf Tx} = \frac{\mathrm{cov}\left( \alpha_{11}(t), \alpha_{21}(t) \right)}{\sigma_{\alpha_{11}(t)}\sigma_{\alpha_{21}(t)}} = \frac{\mathrm{cov}\left( \alpha_{12}(t), \alpha_{22}(t) \right)}{\sigma_{\alpha_{12}(t)}\sigma_{\alpha_{22}(t)}}, 
\end{align}
and the links connected to receiver ${\sf Rx}_j$ have a correlation coefficient $\rho_{\sf Rx}$, \emph{i.e.}
\begin{align}
\rho_{\sf Rx} = \frac{\mathrm{cov}\left( \alpha_{11}(t), \alpha_{12}(t) \right)}{\sigma_{\alpha_{11}(t)}\sigma_{\alpha_{12}(t)}} = \frac{\mathrm{cov}\left( \alpha_{21}(t), \alpha_{22}(t) \right)}{\sigma_{\alpha_{21}(t)}\sigma_{\alpha_{242}(t)}}. 
\end{align}

We note that fixing $-1 \leq \rho_{\sf Tx}, \rho_{\sf Rx} \leq 1$ imposes a feasible set on $p$. More precisely, we require $p$ to be in $\mathcal{S}_{\rho_{\sf Tx}} \cap \mathcal{S}_{\rho_{\sf Rx}}$ where $\mathcal{S}_{\rho_{\sf Tx}}$ and $\mathcal{S}_{\rho_{\sf Rx}}$ are defined below. First, we define
\begin{align}
\label{Eq:Pij}
& p^{\sf Tx}_{k\ell} \overset{\triangle}= \\
& \Pr\left( \alpha_{11}(t) = k, \alpha_{21}(t) = \ell \right) = \Pr\left( \alpha_{22}(t) = k, \alpha_{12}(t) = \ell \right),  \nonumber \\
& p^{\sf Rx}_{k\ell} \overset{\triangle}= \nonumber\\ 
& \Pr\left( \alpha_{11}(t) = k, \alpha_{12}(t) = \ell \right) = \Pr\left( \alpha_{22}(t) = k, \alpha_{21}(t) = \ell \right), \nonumber 
\end{align}
for $k,\ell \in \{ 0, 1 \}$. Then, $\mathcal{S}_{\rho_{\sf Tx}} \subseteq \left[ 0, 1\right]$ is the set of all values for $p$ such that
\begin{equation}
\label{Eq:RangeTx}
\left\{ \begin{array}{ll}
\vspace{1mm} 0 \leq p^{\sf Tx}_{00}, p^{\sf Tx}_{10}, p^{\sf Tx}_{01}, p^{\sf Tx}_{11} \leq 1 &  \\
\vspace{1mm} \sum_{i,j \in \{ 0, 1 \}}p^{\sf Tx}_{ij} = 1 &  \\
\vspace{1mm} p^{\sf Tx}_{10} + p^{\sf Tx}_{11} = p &  \\
\vspace{1mm} p^{\sf Tx}_{01} + p^{\sf Tx}_{11} = p &  \\
p q \rho_{\sf Tx} = p^{\sf Tx}_{11} - p^2 &
\end{array} \right.
\end{equation}
where the first two conditions follow the probability space definition; the third and the fourth equalities hold since
\begin{align}
p & = \Pr \left( \alpha_{11}(t) = 1 \right) = \Pr \left( \alpha_{11}(t) = 1, \alpha_{21}(t) = 0 \right) \nonumber\\
& +  \Pr \left( \alpha_{11}(t) = 1, \alpha_{21}(t) = 1 \right) \overset{(\ref{Eq:Pij})}= p^{\sf Tx}_{10} + p^{\sf Tx}_{11}, \nonumber \\
p & = \Pr \left( \alpha_{21}(t) = 1 \right) = \Pr \left( \alpha_{11}(t) = 1, \alpha_{21}(t) = 1 \right) \nonumber \\
& +  \Pr \left( \alpha_{11}(t) = 0, \alpha_{21}(t) = 1 \right) \overset{(\ref{Eq:Pij})}= p^{\sf Tx}_{01} + p^{\sf Tx}_{11};
\end{align}
the last equality is derived from (\ref{Eq:rhoTx}) as follows:
\begin{align}
\rho_{\sf Tx} & = \frac{\mathrm{cov}\left( \alpha_{11}(t), \alpha_{21}(t) \right)}{\sigma_{\alpha_{11}(t)}\sigma_{\alpha_{21}(t)}} \nonumber\\
& = \frac{\mathbb{E}\left[ \alpha_{11}(t) \alpha_{21}(t) \right] - \mathbb{E}\left[ \alpha_{11}(t) \right] \mathbb{E}\left[ \alpha_{21}(t) \right]}{pq} \nonumber \\ 
& \Rightarrow p q \rho_{\sf Tx} = p^{\sf Tx}_{11} - p^2.
\end{align}
Thus, we get
\begin{equation}
\left\{ \begin{array}{ll}
\vspace{1mm} p^{\sf Tx}_{11} = pq\rho_{\sf Tx}+p^2 &  \\
\vspace{1mm} p^{\sf Tx}_{10} = p - pq\rho_{\sf Tx} - p^2 &  \\
\vspace{1mm} p^{\sf Tx}_{01} = p - pq\rho_{\sf Tx} - p^2 &  \\
p^{\sf Tx}_{00} = 1 - p^{\sf Tx}_{11} - p^{\sf Tx}_{10} - p^{\sf Tx}_{01} &
\end{array} \right.
\end{equation}

Similarly, $\mathcal{S}_{\rho_{\sf Rx}} \subseteq \left[ 0, 1\right]$ is the set of all values for $p$ such that
\begin{equation}
\label{Eq:RangeRx}
\left\{ \begin{array}{ll}
\vspace{1mm} 0 \leq p^{\sf Rx}_{00}, p^{\sf Rx}_{10}, p^{\sf Rx}_{01}, p^{\sf Rx}_{11} \leq 1 &  \\
\vspace{1mm} \sum_{i,j \in \{ 0, 1 \}}p^{\sf Rx}_{ij} = 1 &  \\
\vspace{1mm} p^{\sf Rx}_{10} + p^{\sf Rx}_{11} = p &  \\
\vspace{1mm} p^{\sf Rx}_{01} + p^{\sf Rx}_{11} = p &  \\
p q \rho_{\sf Rx} = p^{\sf Rx}_{11} - p^2 &
\end{array} \right.
\end{equation}
which gives us
\begin{equation}
\left\{ \begin{array}{ll}
\vspace{1mm} p^{\sf Rx}_{11} = pq\rho_{\sf Rx}+p^2 &  \\
\vspace{1mm} p^{\sf Rx}_{10} = p - pq\rho_{\sf Rx} - p^2 &  \\
\vspace{1mm} p^{\sf Rx}_{01} = p - pq\rho_{\sf Rx} - p^2 &  \\
p^{\sf Rx}_{00} = 1 - p^{\sf Rx}_{11} - p^{\sf Rx}_{10} - p^{\sf Rx}_{01} &
\end{array} \right.
\end{equation}
Moreover, we have 
\begin{align}
& p^{\sf Tx}_{11} \geq 0 \overset{\rho_{\sf Tx} \neq 1}\Rightarrow p \geq \frac{-\rho_{\sf Tx}}{1-\rho_{\sf Tx}}, \nonumber \\
& p^{\sf Rx}_{11} + p^{\sf Rx}_{10} + p^{\sf Rx}_{01} \leq 1 \overset{\rho_{\sf Tx} \neq 1}\Rightarrow p \leq \frac{1}{1-\rho_{\sf Tx}}
\end{align}
Thus, $\mathcal{S}_{\rho_{\sf Tx}}$ is given by
\begin{align}
\mathcal{S}_{\rho_{\sf Tx}} \overset{\triangle}= \left[ \max\left\{ 0, \frac{-\rho_{\sf Tx}}{1-\rho_{\sf Tx}} \right\}, \min\left\{ 1, \frac{1}{1-\rho_{\sf Tx}} \right\} \right],
\end{align}
and similarly, $\mathcal{S}_{\rho_{\sf Rx}}$ is given by
\begin{align}
\mathcal{S}_{\rho_{\sf Rx}} \overset{\triangle}= \left[ \max\left\{ 0, \frac{-\rho_{\sf Rx}}{1-\rho_{\sf Rx}} \right\}, \min\left\{ 1, \frac{1}{1-\rho_{\sf Rx}} \right\} \right],
\end{align}
where we set $\mathcal{S}_{\rho_{\sf Tx} = 1}, \mathcal{S}_{\rho_{\sf Rx} = 1} \overset{\triangle}= \left[ 0, 1 \right]$. 

\subsection{Encoding, Decoding, and Probability of Error}

\subsubsection{Finite-Field Model}

For the finite-field interference channel we consider the scenario in which ${\sf Tx}_i$ wishes to communicate message $\hbox{W}_i \in \{ 1,2,\ldots,2^{n R_i}\}$ to ${\sf Rx}_i$ during $n$ channel uses, $i = 1,2$. We assume that the messages and the channel gains are {\it mutually} independent and the messages are chosen uniformly. Let message $\hbox{W}_i$ be encoded as $X_i^n$ at ${\sf Tx}_i$, where at time $t$ we have $X_i[t] = f_i(\hbox{W}_i, G^{t-1})$. Receiver ${\sf Rx}_i$ is only interested in decoding $\hbox{W}_i$, and it will decode the message using the decoding function $\widehat{\hbox{W}}_i = g_i(Y_i^n,G^n)$, and the knowledge of $\rho_{\sf Tx}$ and $\rho_{\sf Rx}$. An error occurs when $\widehat{\hbox{W}}_i \neq \hbox{W}_i$. The average probability of decoding error is given by
\begin{equation}
\label{eq:errorterms}
\lambda_{i,n} = \mathbb{E}[P[\widehat{\hbox{W}}_i \neq \hbox{W}_i]], \hspace{5mm} i = 1, 2,
\end{equation}
where the expectation is taken with respect to the random choice of the transmitted messages $\hbox{W}_1$ and $\hbox{W}_2$. 

A rate-tuple $(R_1,R_2)$ is said to be achievable if there exist encoding and decoding functions at the transmitters and the receivers respectively, such that the decoding error probabilities $\lambda_{1,n},\lambda_{2,n}$ go to zero as $n$ goes to infinity. 

\begin{definition}
\label{Def:CapDef}
The capacity region of the two-user finite-field interference channel with parameter $p \in \mathcal{S}_{\rho}$, correlation coefficient $\rho_{\sf Tx}$ at the transmitters, and correlation coefficient $\rho_{\sf Rx}$ at the receivers, is denoted by $\mathcal{C}\left( p, \rho_{\sf Tx}, \rho_{\sf Rx} \right)$, and is the closure of all achievable rate-tuples defined above. 
\end{definition}

\subsubsection{Packet Networks}

Consider the scenario in which ${\sf Tx}_i$ wishes to communicate $m_i$ packets\footnote{We note that in this paper we assume transmitters start with a number of packets to communicate rather than stochastic arrival of packets.} to ${\sf Rx}_i$ during $n$ uses of the channel, $i = 1,2$. We assume that the packets and the channel gains are {\it mutually} independent. Receiver ${\sf Rx}_i$ is only interested in packets from ${\sf Tx}_i$, and it will recover (decode) them using the received signal $\vec{y}_i^n$, the knowledge of the channel state information, and the knowledge of $\rho_{\sf Tx}$ and $\rho_{\sf Rx}$. 

At each time instant, transmitter $i$ creates a linear combination of the $m_i$ packets it has for receiver $i$ by choosing a precoding vector $\vec{v}_i(t) \in \mathbb{R}^{1 \times m_i}$, $i=1,2$. Transmit signals at time $t$ at ${\sf Tx}_1$ and ${\sf Tx}_2$ are given by $\vec{v}_1(t) \mathbf{A}$ and $\vec{v}_2(t) \mathbf{B}$ respectively, where $\mathbf{A} = \left[ \vec{a}_1, \vec{a}_2, \ldots, \vec{a}_{m_1} \right]^{\top}$, and $\mathbf{B} = \left[ \vec{b}_1, \vec{b}_2, \ldots, \vec{b}_{m_2} \right]^{\top}$. We impose the following constraints on $\vec{v}_1(t)$ and $\vec{v}_2(t)$ to satisfy the power constraint (\emph{i.e.} an average transmit power of $P$) at the transmitters:
\begin{align}
||\vec{v}_1(t)||, ||\vec{v}_2(t)|| \leq 1,
\end{align}
where $||.||$ denotes the Euclidean norm. Since transmitters learn the CSI, $\alpha(t)$, with unit delay, $\vec{v}_i(t)$ is only a function of $\alpha^{t-1}$ and the correlation coefficients $\rho_{\sf Tx}$ and $\rho_{\sf Rx}$. The received signal of receiver $i$ at time $t$ can be represented by
\begin{align}
\vec{y}_i(t) = \alpha_{i1}(t) g_{i1}(t) \vec{v}_1(t) \mathbf{A} + \alpha_{i2}(t) g_{i2}(t) \vec{v}_2(t) \mathbf{B}.
\end{align}
We denote the overall precoding matrix of transmitter $i$ by $\mathbf{V}_i^n \in \mathbb{R}^{n \times m_i}$, where the $t^{\mathrm{th}}$ row of $\mathbf{V}_i^n$ is $\vec{v}_i(t)$. Furthermore, let $\mathbf{G}_{ij}^n$ be an $n \times n$ diagonal matrix where the $t^{\mathrm{th}}$ diagonal element is $\alpha_{ij}(t) g_{ij}(t)$, $i,j = 1,2$. Thus, we can write the output at receiver $i$ as
\begin{align}
\label{eq:PHYLayer}
\vec{y}_i^n &\overset{\triangle}= \left( \vec{y}_i(1), \ldots, \vec{y}_i(n) \right)^{\sf T} \nonumber \\
& =  \mathbf{G}_{i1}^n \mathbf{V}_1^n \mathbf{A} + \mathbf{G}_{i2}^n \mathbf{V}_2^n \mathbf{B}, \quad i = 1,2.
\end{align}

We denote the interference subspace at receiver $i$ by $\mathcal{I}_i$ and is given by
\begin{align}
\mathcal{I}_i \overset{\triangle}= \mathrm{colspan}\left( \mathbf{G}_{i\bar{i}} \mathbf{V}_{\bar{i}} \right), \quad i = 1,2,
\end{align}
where $\mathrm{colspan}(.)$ of a matrix represents the subspace spanned by its column vectors. Let $\mathcal{I}_i^c$ denote the subspace orthogonal to $\mathcal{I}_i$. Then in order for decoding to be successful at receiver $i$, it should be able to create $m_i$ linearly independent equations that are solely in terms of its intended packets. Mathematically speaking, this means that the image of $\mathrm{colspan}\left( \mathbf{G}_{ii}^n \mathbf{V}_{i}^n \right)$ on $\mathcal{I}_i^c$ should have the same dimension as  $\mathrm{colspan}\left( \mathbf{V}_{i}^n \right)$ itself. More precisely, we require
\begin{align}
\label{eq:decodability}
\mathrm{dim} & \left( \mathrm{Proj}_{\mathcal{I}_i^c}~\mathrm{colspan}\left( \mathbf{G}_{ii}^n \mathbf{V}_{i}^n \right) \right) \nonumber\\ 
& = \mathrm{dim}\left( \mathrm{colspan}\left( \mathbf{V}_{i}^n \right) \right) = m_i,  \quad i = 1,2.
\end{align}

We say that a throughput tuple of $\left( R_1,R_2 \right) = \left( m_1/n, m_2/n \right)$ is achievable, if there exists a choice of $\mathbf{V}_1^n$ and $\mathbf{V}_2^n$, such that (\ref{eq:decodability}) is satisfied for $i=1,2$ with probability $1$. 

\begin{definition}
\label{Def:ThroughDef}
The throughput region of the two-user interference packet network with parameter $p \in \mathcal{S}_{\rho}$, correlation coefficient $\rho_{\sf Tx}$ at the transmitters, and correlation coefficient $\rho_{\sf Rx}$ at the receivers, is denoted by $\mathcal{T}\left( p, \rho_{\sf Tx}, \rho_{\sf Rx} \right)$, and is the closure of all achievable throughput tuples $\left( R_1,R_2 \right)$ defined above. We note that the unit for $R_i$ is packets per channel use.
\end{definition}

\begin{remark}
\label{Remark:Family}
We would like to point out a subtle point regarding our problem formulation here. Fixing $p, \rho_{\sf Tx},$ and $\rho_{\sf Rx}$ does not specify a unique channel but rather a family of channels as we only fix pairwise correlation coefficients at each node. For instance, we did not specify the correlation coefficient between $\alpha_{11}(t)$ and $\alpha_{22}(t)$, and any correlation coefficient between these two random variables consistent with the specified coefficients is acceptable. For this family of channels, the capacity (throughput) region is identical and given in Theorem~\ref{THM:CapacityCorrelated}. For more discussion on Bernoulli random variables with known pairwise correlations, we refer the reader to~\cite{wainwright2008graphical}. 
\end{remark}


\section{Statement of the Main Results}
\label{Section:Main}

The following theorem establishes the capacity region of the two-user finite-field spatially correlated interference channel and the throughput region of spatially correlated interference packet networks with two transmitter-receiver pairs.

The capacity region for the finite-field model follows the typical information-theoretic formulation and the outer-bounds hold for \emph{all} input distributions. The throughput region of the spatially correlated interference packet networks, however, is derived under the assumption that the encoding at the transmitters is limited to linear functions. In this regard, the throughput region of the spatially correlated interference packet networks is similar to the linear degrees-of-freedom region of wireless networks~\cite{lashgari2014linear,kao2017linear,yang2017linear}.   


\begin{theorem}
\label{THM:CapacityCorrelated}
For the two-user finite-field spatially correlated interference channel and the spatially correlated interference packet network with two transmitter-receiver pairs as described in Section~\ref{Section:Problem} and for $p \in \mathcal{S}_{\rho_{\sf Tx}} \cap \mathcal{S}_{\rho_{\sf Rx}}$, we have
\begin{align}
\label{Eq:Capacity}
& \mathcal{C}\left( p, \rho_{\sf Tx}, \rho_{\sf Rx} \right) = \mathcal{T}\left( p, \rho_{\sf Tx}, \rho_{\sf Rx} \right) = \\
& \left\{ \begin{array}{ll}
\vspace{1mm} 0 \leq R_i \leq p, & \\
R_i + \beta\left( p, \rho_{\sf Tx} \right) R_{\bar{i}} \leq \beta\left( p, \rho_{\sf Tx} \right) \left( 1 - p^{\sf Rx}_{00} \right). &
\end{array} \right\} \nonumber 
\end{align}
for $i = 1,2$, where the capacity and the throughput regions are defined in Definitions~\ref{Def:CapDef}~and~\ref{Def:ThroughDef}, respectively, and\footnote{We point out that for all values of $p \in [ 0, 1]$ and $\rho_{\sf Tx} \in [ -1, 1]$, we have $\beta\left( p, \rho_{\sf Tx} \right) \geq 1$.} 
\begin{align}
\label{Eq:Beta}
\beta\left( p, \rho_{\sf Tx} \right) &= 1 + \left( 1 - \rho_{\sf Tx} \right) \left( 1 - p \right), \nonumber \\
p^{\sf Rx}_{00} &= 1+p^2+pq\rho_{\sf Rx}-2p.
\end{align}
\end{theorem}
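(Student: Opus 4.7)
The plan is to match a converse based on a new extremal inequality (entropic for the finite-field channel, rank-ratio for the packet network) with a linear-coding achievability scheme built on the opportunistic queue structure sketched in the introduction. Both arguments carry over between the two models with identical algebraic content, so I would develop them in parallel and only note where $H(\cdot)$ must be replaced by the rank (dimension) of a linear subspace. The individual bounds $R_i \leq p$ follow immediately from a pointwise count: over $n$ slots only about $np$ of the direct-link coefficients $\alpha_{ii}(t)$ are nonzero, and Fano's inequality (in the capacity statement) or dimension counting (in the linear throughput statement) gives $R_i \leq p$.

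For the weighted bound $R_i + \beta(p,\rho_{\sf Tx}) R_{\bar i} \leq \beta(p,\rho_{\sf Tx})(1-p^{\sf Rx}_{00})$, I would use a MAC-BC-duality-style genie: supply receiver $\bar i$ with the interfering message $W_i$, so that $Y_{\bar i}^n$ becomes an effective point-to-point output from ${\sf Tx}_{\bar i}$. Standard Fano manipulations then reduce $n(R_i + \beta R_{\bar i})$ to a sum of an $H(Y_i^n)$ term, bounded by the average number of nonzero receptions at ${\sf Rx}_i$, namely $n(1-p^{\sf Rx}_{00})$, and the difference
\begin{align}
\beta \, H(Y_{\bar i}^n \mid W_i,\alpha^n) - H(Y_i^n \mid W_i,\alpha^n).
\end{align}
The heart of the converse is the extremal inequality that this difference is nonpositive for every admissible input distribution, with $\beta(p,\rho_{\sf Tx}) = 1+(1-\rho_{\sf Tx})(1-p)$ identified as the sharp constant. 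Intuitively, $\beta$ is the maximum ratio by which ${\sf Tx}_{\bar i}$ can make its signal visible at ${\sf Rx}_{\bar i}$ relative to ${\sf Rx}_i$ using only delayed CSIT: when $\rho_{\sf Tx}=1$ the two link coefficients $\alpha_{i\bar i}$ and $\alpha_{\bar i\bar i}$ are identical and no favoring is possible ($\beta=1$), whereas decreasing $\rho_{\sf Tx}$ widens the imbalance allowable in the joint distribution of $(\alpha_{i\bar i},\alpha_{\bar i\bar i})$ and inflates $\beta$.

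Achievability is an opportunistic, three-phase linear scheme. In Phase~1 each transmitter broadcasts its raw packets uncoded; using delayed CSIT it then classifies each transmitted packet, based on the observed $\alpha(t)$, into one of three queues: delivered to the intended receiver; received only at the unintended receiver (and therefore available there as known side information); or received at both receivers. In Phase~2 each transmitter sends pre-computed linear combinations of queued packets so that every transmission is either (a) a fresh equation in a desired message at the intended receiver or (b) cancellable using stored side information at the unintended receiver. The correlation coefficients enter only through the \emph{occupation rates} of the three queues, which are computable directly from the joint probabilities $p^{\sf Tx}_{k\ell}$ and $p^{\sf Rx}_{k\ell}$. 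A direct rank accounting at the end of Phase~2 shows that the decodability condition (\ref{eq:decodability}) is met with probability one and that every vertex of the outer-bound polygon is achieved.

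The main obstacle is proving the extremal inequality for \emph{all} input distributions, including those whose symbols depend on $\alpha^{t-1}$ through arbitrary functions and are therefore correlated across transmitters. A naive memoryless comparison fails, so I would condition on the past channel state $\alpha^{t-1}$, exploit the i.i.d. structure of $\alpha(t)$ in time to reduce to a single-letter question, and then show by enumeration over the four outcomes of $(\alpha_{i\bar i}(t),\alpha_{\bar i\bar i}(t))$, weighted by the pairwise correlation $\rho_{\sf Tx}$, that the pointwise contribution to $\beta H(Y_{\bar i}(t)\mid\cdot) - H(Y_i(t)\mid\cdot)$ is nonpositive for every distribution on $X_{\bar i}(t)$. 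Telescoping over $t$ then yields the $n$-letter statement, and replacing entropies by column-span dimensions gives the rank-ratio version needed for the packet-network half of Theorem~\ref{THM:CapacityCorrelated}.
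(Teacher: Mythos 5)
Your overall architecture --- a genie-aided converse hinging on an extremal entropy inequality with constant $\beta(p,\rho_{\sf Tx})$, its rank-ratio analogue for the packet network, and a queue-based opportunistic linear scheme for achievability --- is the same as the paper's, and your treatment of the individual bounds $R_i\le p$ and your achievability sketch are consistent with it. However, the converse as written contains a concrete error in where $\beta$ sits. With your genie ($W_i$ supplied to ${\sf Rx}_{\bar i}$), Fano gives $n(R_i+\beta R_{\bar i})\le H(Y_i^n|\alpha^n)+\beta H(Y_{\bar i}^n|W_i,\alpha^n)-H(Y_i^n|W_i,\alpha^n)$, so closing the argument requires $\beta H(Y_{\bar i}^n|W_i,\alpha^n)\le H(Y_i^n|W_i,\alpha^n)$. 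Since $\beta\ge 1$ and, given $W_i$, the residual randomness in both outputs comes from $X_{\bar i}$ seen through statistically identical $\mathcal{B}(p)$ links, this is false (i.i.d.\ uniform inputs give both entropies equal to $np$), and your own stated intuition (``${\sf Tx}_{\bar i}$ is at most $\beta$ times as visible at ${\sf Rx}_{\bar i}$ as at ${\sf Rx}_i$'') asserts the reverse inequality $H(Y_{\bar i}^n|W_i,\alpha^n)\le\beta H(Y_i^n|W_i,\alpha^n)$. Moreover, if your difference were nonpositive you would conclude $R_i+\beta R_{\bar i}\le 1-p^{\sf Rx}_{00}$, which is strictly stronger than the theorem and contradicts its achievability half. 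The fix is to pair the genie with the weights the other way around: give $W_{\bar i}$ to ${\sf Rx}_i$ (the rate carrying coefficient $1$), so the surviving term is $\beta H(Y_{\bar i}^n|\alpha^n)\le n\beta(1-p^{\sf Rx}_{00})$ and the lemma you need is $H(Y_i^n|W_{\bar i},\alpha^n)\le\beta H(Y_{\bar i}^n|W_{\bar i},\alpha^n)$, the correctly oriented leakage inequality.

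Second, your plan to prove the extremal inequality by showing that a single-letter quantity is pointwise nonpositive ``for every distribution on $X_{\bar i}(t)$'' cannot work even after the sign is fixed: the difficulty is not the per-letter statistics (both links are $\mathcal{B}(p)$, so unconditioned single-letter output entropies coincide) but the fact that the two $n$-letter entropies are conditioned on \emph{different} pasts, $Y_i^{t-1}$ versus $Y_{\bar i}^{t-1}$, and a transmitter with delayed CSIT can try to make its signal predictable from one past but not the other. The paper's proof resolves this by lower-bounding the increment $H(Y_2(t)|Y_2^{t-1},W_2,\alpha^t)=pH(X_1(t)|Y_2^{t-1},W_2,\alpha^t)$ after further conditioning on \emph{both} pasts $\left(Y_1^{t-1},Y_2^{t-1}\right)$, and then identifying $pH(X_1(t)|Y_1^{t-1},Y_2^{t-1},W_2,\alpha^t)$ with $\frac{p}{1-p^{\sf Tx}_{00}}$ times the increment of the \emph{joint} entropy $H(Y_1^n,Y_2^n|W_2,\alpha^n)$; the ratio $(1-p^{\sf Tx}_{00})/p=1+(1-\rho_{\sf Tx})(1-p)$ is exactly where $\beta$ comes from, and dropping $Y_2^n$ from the joint entropy at the end gives the claim (and identically, with entropies replaced by rank increments, the rank-ratio version). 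Your sketch is missing this joint-entropy/joint-rank step, which is the actual content of the lemma; without it the telescoping you describe does not go through.
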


The converse proof of Theorem~\ref{THM:CapacityCorrelated} for the finite-field interference channel relies on an extremal entropy inequality for correlated channels that we present in Section~\ref{Section:ConverseFinite}. The slope of the outer-bounds (\emph{i.e.} $\beta\left( p, \rho_{\sf Tx} \right)$ in the theorem) is determined by this inequality and depends on spatial correlation at the transmitters. Using the extremal inequality and genie-aided arguments, we obtain the outer-bound. For the spatially correlated interference packet networks, we modify the extremal entropy inequality to obtain an extremal rank-ratio inequality for correlated channels that we present in Section~\ref{Section:ConversePacket}. We then derive the outer-bounds following similar steps to the converse proof of the finite-field model. 

The achievability proof is very similar for the two models and thus, we only present one communication protocol focusing on packet networks in Section~\ref{Section:Achievability}. The proposed communication protocol has multiple phases of communications and after each phase, transmitters use the delayed interference pattern knowledge to update the status of the previously communicated packets. The goal is to retransmit linear combinations of packets in a way to help receivers decode their corresponding packets faster than the case in which individual packets are retransmitted. We also highlight the key differences between this protocol and the one suitable for independent channel gains.

\begin{figure}[ht]
\centering
\includegraphics[width = \columnwidth]{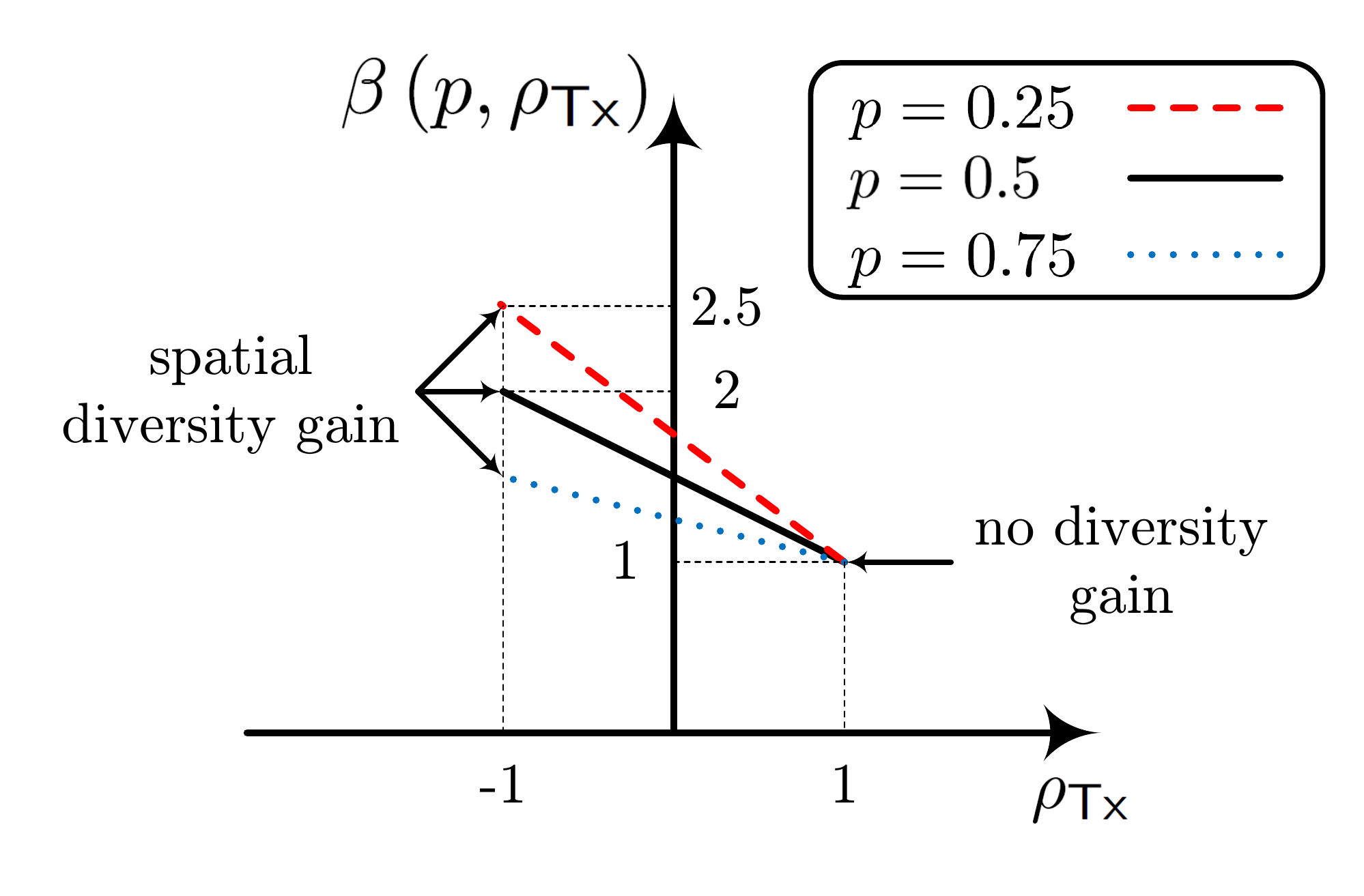}
\caption{Negative values of $\rho_{\sf Tx}$ increases $\beta\left( p, \rho_{\sf Tx} \right)$ and the ability of transmitters to perform interference alignment.\label{Fig:DiversityGain}}
\end{figure}

Before presenting the proofs, we provide further insights and interpretations of Theorem~\ref{THM:CapacityCorrelated}. We start by discussing the importance of $\beta\left( p, \rho_{\sf Tx} \right)$. The value of $\beta\left( p, \rho_{\sf Tx} \right)$ determines the ability of each transmitter to differentiate receivers in terms of the entropy (in the finite-field model) or the number of linearly independent equations (in the packet networks) it can provide to them: $\beta\left( p, \rho_{\sf Tx} \right) = 1$ means the receivers are identical from the perspective of each transmitter and $\beta\left( p, \rho_{\sf Tx} \right) > 1$ provides {\it spatial diversity} meaning that each transmitter can favor one receiver over the other. Fig.~\ref{Fig:DiversityGain} illustrates $\beta\left( p, \rho_{\sf Tx} \right)$ as a function of $\rho_{\sf Tx}$ for $p = 3/4, 1/2, 1/4$. We observe that negative values of $\rho_{\sf Tx}$ increase $\beta\left( p, \rho_{\sf Tx} \right)$ and this observation is consistent with the fact that negative $\rho_{\sf Tx}$ results in higher network throughput as it increases the ability of transmitters to perform interference alignment.

\begin{figure}[ht]
\centering
\includegraphics[width = \columnwidth]{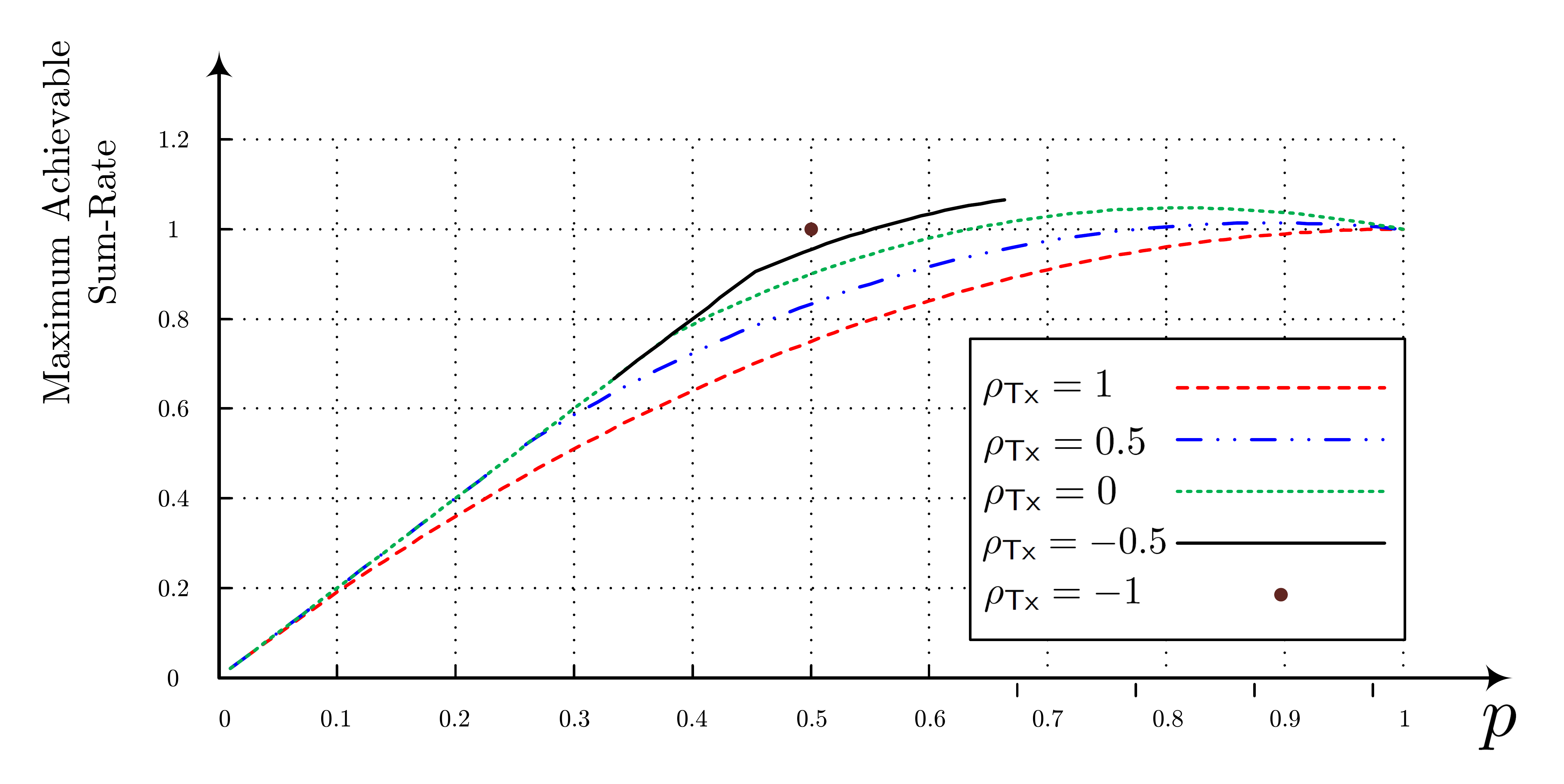}
\caption{Maximum achievable sum-rate for $\rho_{\sf Tx} \in \{ -1, -0.5, 0, 0.5, 1 \}$, $\rho_{\sf Rx} = 0$, and $p \in \mathcal{S}_{\rho_{\sf Tx}} \cap \mathcal{S}_{\rho_{\sf Rx}}$.\label{Fig:SumCap}}
\end{figure}

We then focus on the impact of correlation at the transmitters on the capacity (throughput) region. Note that since $\alpha_{ji}(t)$'s are distributed as $\mathcal{B}(p)$, the maximum throughput we can expect in this network is $2p$. Fig.~\ref{Fig:SumCap} depicts the maximum achievable sum-rate for $\rho_{\sf Tx} \in \{ -1, -0.5, 0, 0.5, 1 \}$, $\rho_{\sf Rx} = 0$ (\emph{i.e.} independent links at the receivers), and $p \in \mathcal{S}_{\rho_{\sf Tx}}$. For a fixed value of $p$, as $\rho_{\sf Tx}$ moves from $+1$ to $-1$, the maximum achievable sum-rate improves. In fact, for $p=0.5$ as discussed in the introduction and shown in Fig.~\ref{Fig:CapacityHalf}, $\mathcal{C}\left( 0.5, 1, 0\right)$ coincides with the one where transmitters do not have any access to interference pattern~\cite{AlirezaInfocom2014}. On the other hand, $\mathcal{C}\left( 0.5, -1, 0\right)$ includes $\left( R_1, R_2 \right) = \left( 0.5, 0.5 \right)$ which implies that the capacity (throughput) region coincides with the capacity (throughput) region of a network in which a genie informs wireless nodes of the interference pattern before it even happens. Intuitively, this is due to the fact that with fully correlated channels, each transmitter cannot distinguish between the two receivers and as a result, it is not able to perform interference cancellation or interference alignment. However, with negative correlation, a transmitter's power to favor one receiver over the other improves (in terms of the number of independent linear combinations received). This in turn enables the transmitters to perform interference alignment and interference cancellation more efficiently. When all channels are distributed independently and identically across time and space, the capacity is given by  $\mathcal{C}\left( 0.5, 0, 0\right)$ as shown in Fig.~\ref{Fig:CapacityHalf} and recovers the results of~\cite{AlirezaInfocom2014}. 

\begin{figure}[ht]
\centering
\includegraphics[height = 5cm]{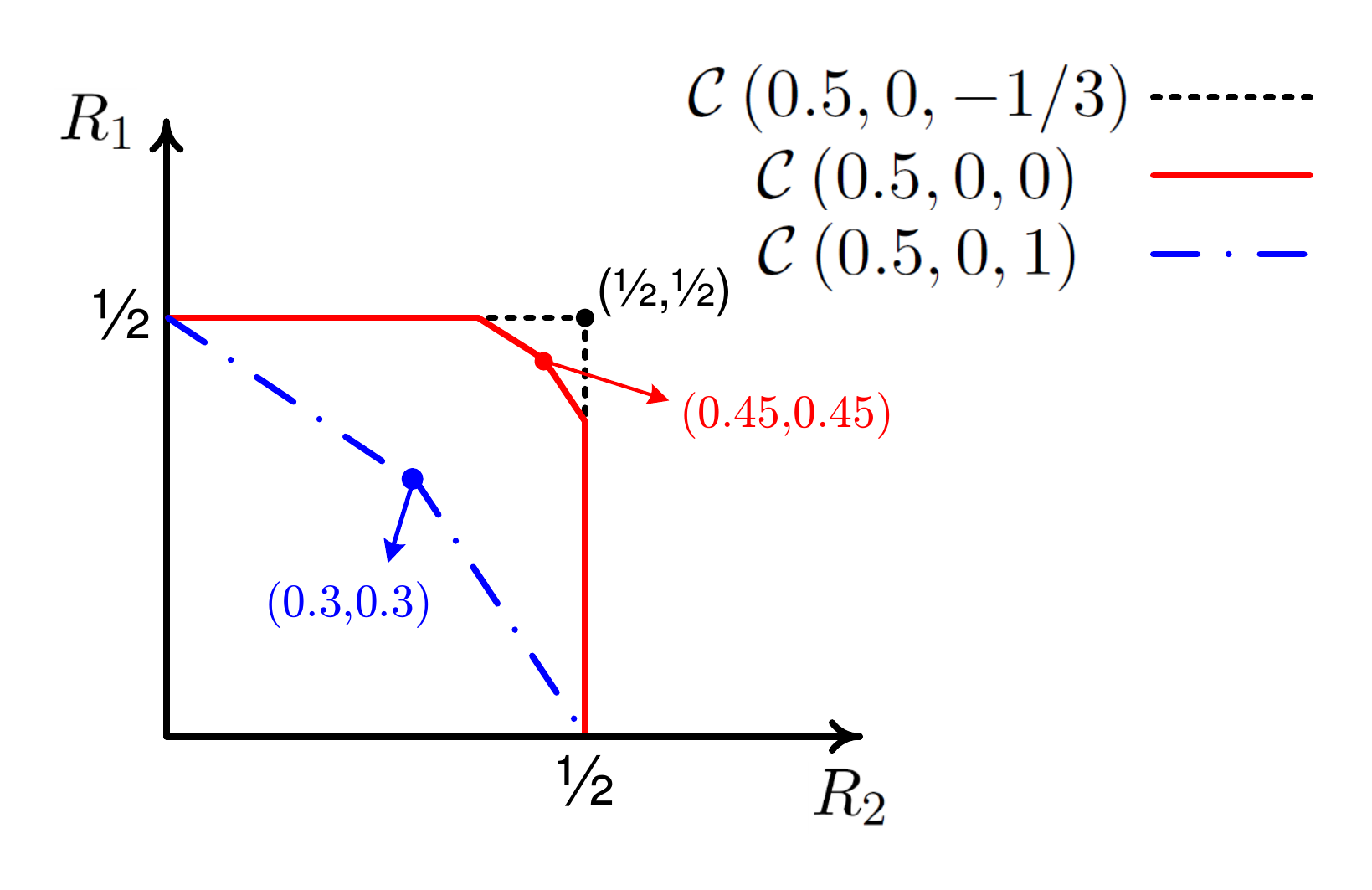}
\caption{Capacity (or throughput) region for $p =0.5$, $\rho_{\sf Tx} = 0$, and $\rho_{\sf Rx} \in \{ -1/3, 0, 1 \}$.\label{Fig:region-receiver}}
\end{figure}

To study the impact of spatial correlation at the receivers on the capacity (throughput) region, we consider $\rho_{\sf Tx} = 0$ (\emph{i.e.} independent links at the transmitters), $\rho_{\sf Rx} \in \{ -1/3, 0, 1 \}$,  and $p = 0.5$. As $\rho_{\sf Rx}$ moves from $+1$ to $-1$, the maximum achievable sum-rate improves as shown Fig.~\ref{Fig:region-receiver}. For $\rho_{\sf Rx} \in \left[ -1, -1/3 \right]$, $\mathcal{C}\left( 0.5, 0, \rho_{\sf Rx} \right)$ includes $\left( R_1, R_2 \right) = \left( 0.5, 0.5 \right)$ which implies that the capacity region coincides with that of instantaneous knowledge. Intuitively, negative spatial correlation at the receivers separates the signal subspace from the interference subspace which results in higher network throughput, see Section~\ref{Section:Achievability} for more details. 

\begin{figure}[ht]
\centering
\includegraphics[width = \columnwidth]{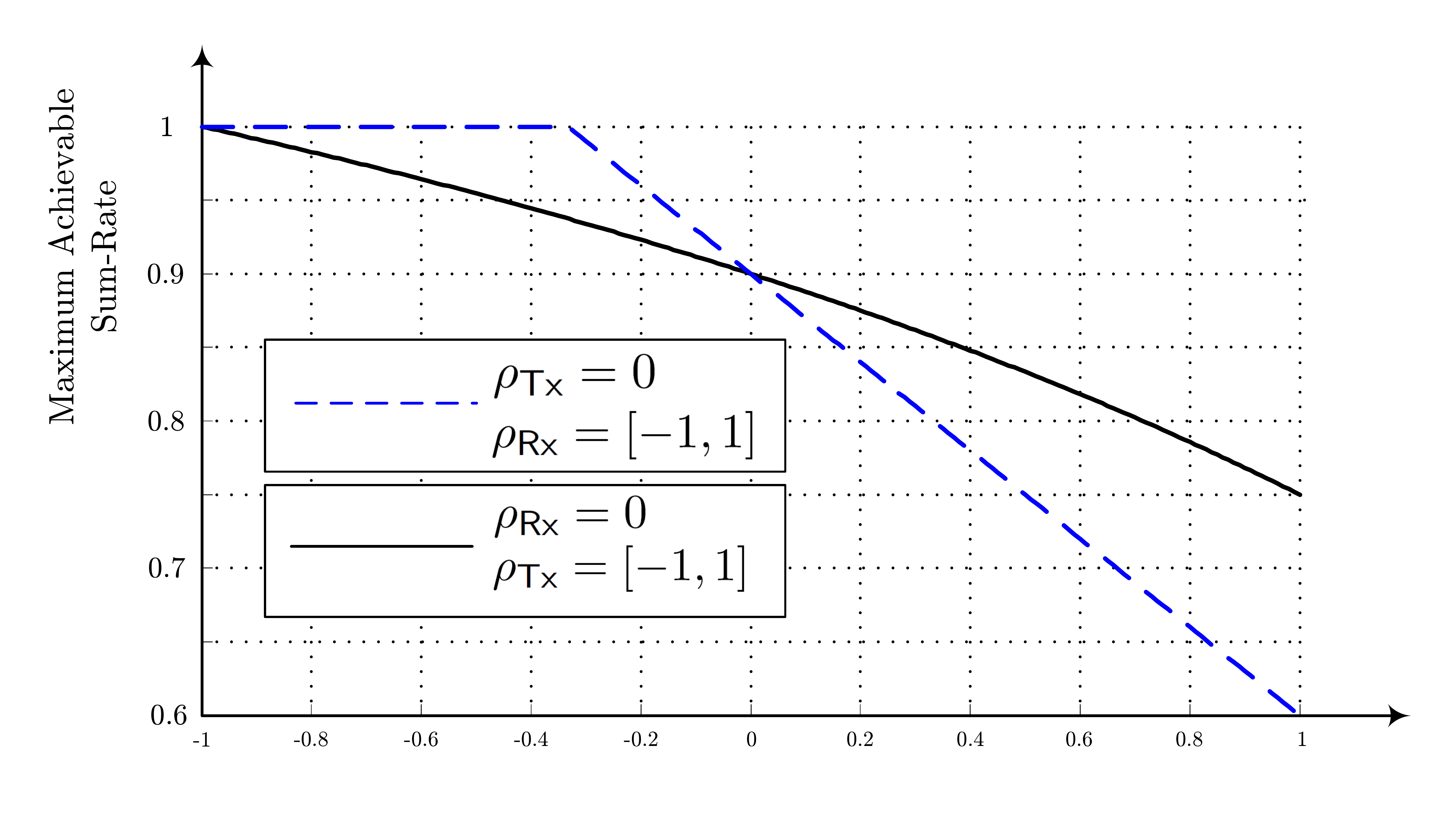}
\caption{The impact of spatial correlation at the transmitter side or at the receiver side on the maximum achievable sum-rate for $p =0.5$. For $\rho_{\sf Tx} = 0$, the X axis represents $\rho_{\sf Rx}$ from $-1$ to $1$, and for $\rho_{\sf Rx} = 0$ the X axis represents $\rho_{\sf Tx}$ from $-1$ to $1$\label{Fig:SumCapBoth}}
\end{figure}

To better visualize the impact of spatial correlation at the transmitters and at the receivers on the capacity (throughput) region, we plot the maximum achievable sum-rate in Fig.~\ref{Fig:SumCapBoth} for two different scenarios with $p=0.5$. First, we set $\rho_{\sf Tx}$ to be zero (dashed blue line), \emph{i.e.} independent links at the transmitters. In this case we see that the impact of $\rho_{\sf Rx}$ is linear on the maximum achievable sum-rate and it saturates at $1$. Then, we set $\rho_{\sf Rx}$ to be zero (solid black line), \emph{i.e.} independent links at the receivers. As mentioned before, $\rho_{\sf Tx}$ determines the slope of the outer-bounds and thus defines the shape of the throughput region, and $\rho_{\sf Rx}$ defines the size of the throughput region. These arguments will be made mathematically precise in the following sections.


\section{Converse Proof of Theorem~\ref{THM:CapacityCorrelated}: Two Extremal Inequalities for Spatially Correlated Channels}
\label{Section:Converse}

The individual rates are limited by the capacity of a point-to-point erasure channel for which the derivation is well-known and thus omitted. The main focus of this section is the derivation of the other bounds. We first derive these bounds for the finite-field model and then, we focus on packet networks. 

\subsection{Finite-Field Model: An Extremal Entropy Inequality}
\label{Section:ConverseFinite}

We first present an extremal entropy inequality tailored to spatially correlated channels.
\begin{lemma}[Extremal Entropy Inequality for Spatially Correlated Channels]
\label{Lemma:EntropyLeakage}
For the two-user finite-field interference channel with delayed CSIT and correlated links as described in Section~\ref{Section:Problem} and for $p \in \mathcal{S}_{\rho_{\sf Tx}} \cap \mathcal{S}_{\rho_{\sf Rx}}$ and $p \neq 0$, we have
\begin{align}
H\left( Y_2^n | W_2, \alpha^n \right) \geq \frac{1}{\beta} H\left( Y_1^n | W_2, \alpha^n \right),
\end{align}
where
\begin{align}
\beta\left( p, \rho_{\sf Tx} \right) = 1 + \left( 1 - \rho_{\sf Tx} \right) \left( 1 - p \right).
\end{align}
\end{lemma}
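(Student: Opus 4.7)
The plan is to reduce the claim to a per-symbol comparison and read off the factor $\beta$ from the probability that at least one of the two cross-links out of ${\sf Tx}_1$ is active. First, given $W_2$ and $\alpha^n$ the entire transmit sequence $X_2^n$ is deterministic (it equals $f_2(W_2,\alpha^{t-1})$ at each time), so subtracting the known contribution $\alpha_{j2}(t)X_2(t)$ from $Y_j(t)$ yields
\begin{align*}
H(Y_j^n\mid W_2,\alpha^n) = H(S_j^n\mid W_2,\alpha^n), \qquad S_j(t):=\alpha_{j1}(t)\,X_1(t),
\end{align*}
so it suffices to prove $H(S_1^n\mid W_2,\alpha^n)\leq \beta\,H(S_2^n\mid W_2,\alpha^n)$. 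The strategy is to bound $H(S_1^n\mid W_2,\alpha^n)\leq H(S_1^n,S_2^n\mid W_2,\alpha^n)$ and then compare the chain-rule expansions
\begin{align*}
H(S_1^n,S_2^n\mid W_2,\alpha^n) &=\sum_{t=1}^{n} H(S_1(t),S_2(t)\mid W_2,\alpha^n,S_1^{t-1},S_2^{t-1}),\\
H(S_2^n\mid W_2,\alpha^n) &=\sum_{t=1}^{n} H(S_2(t)\mid W_2,\alpha^n,S_2^{t-1})
\end{align*}
term by term.

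The key observation at time $t$ is that the conditioning variables in both expansions lie in the $\sigma$-algebra generated by $(W_1,W_2,\alpha^n\setminus\alpha(t))$, since $X_1(s)=f_1(W_1,\alpha^{s-1})$ and $S_j(s)=\alpha_{j1}(s)X_1(s)$ for $s\leq t-1$. By the spatial model, $\alpha(t)$ is independent of $(W_1,W_2,\alpha^n\setminus\alpha(t))$ and hence of $X_1(t)$ conditional on the past. Because $(S_1(t),S_2(t))=(\alpha_{11}(t)X_1(t),\alpha_{21}(t)X_1(t))$ is a deterministic function of $(X_1(t),\alpha(t))$ that equals $(0,0)$ exactly when $\alpha_{11}(t)=\alpha_{21}(t)=0$ and otherwise fully reveals $X_1(t)$, averaging over $\alpha(t)$ yields
\begin{align*}
H(S_1(t),S_2(t)\mid W_2,\alpha^n,S_1^{t-1},S_2^{t-1}) &= (1-p^{\sf Tx}_{00})\,H(X_1(t)\mid W_2,\alpha^n,S_1^{t-1},S_2^{t-1}),\\
H(S_2(t)\mid W_2,\alpha^n,S_2^{t-1}) &= p\,H(X_1(t)\mid W_2,\alpha^n,S_2^{t-1}).
\end{align*}
The algebra following (\ref{Eq:RangeTx}) gives $1-p^{\sf Tx}_{00}=\beta p$, so the first identity becomes $\beta p\cdot H(X_1(t)\mid\cdots)$, and since additional conditioning on $S_1^{t-1}$ can only shrink the entropy of $X_1(t)$, we arrive at the per-symbol bound
\begin{align*}
H(S_1(t),S_2(t)\mid W_2,\alpha^n,S_1^{t-1},S_2^{t-1}) \;\leq\; \beta\,H(S_2(t)\mid W_2,\alpha^n,S_2^{t-1}).
\end{align*}
Summing over $t$ gives $H(S_1^n,S_2^n\mid W_2,\alpha^n)\leq \beta\,H(S_2^n\mid W_2,\alpha^n)$ and, combined with $H(S_1^n\mid W_2,\alpha^n)\leq H(S_1^n,S_2^n\mid W_2,\alpha^n)$, the claimed inequality.

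The main hurdle is pinning down the joint-versus-marginal decomposition and rigorously justifying that $\alpha(t)$ remains independent of $X_1(t)$ under the mixed past-and-future conditioning on $\alpha^n$, which is what lets the erasure probability $1-p^{\sf Tx}_{00}$ factor cleanly out of the per-symbol joint entropy. Once that is in place, the factor $\beta$ appears directly from the identity $1-p^{\sf Tx}_{00}=\beta p$, and the rest of the argument is chain-rule bookkeeping.
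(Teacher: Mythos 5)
Your proposal is correct and follows essentially the same route as the paper's proof: a per-symbol chain-rule decomposition in which the erasure probabilities factor out as $H(\cdot)=p\,H(X_1(t)\mid\cdots)$ versus $(1-p^{\sf Tx}_{00})\,H(X_1(t)\mid\cdots)$, the identity $1-p^{\sf Tx}_{00}=\beta p$, and ``conditioning reduces entropy'' to reconcile the two conditioning sets. The only differences are cosmetic (you strip out the known $X_2$ contribution up front and bound the joint entropy from above by $\beta$ times the marginal, where the paper bounds the marginal from below by $1/\beta$ times the joint), and the conditional-independence point you flag as the remaining hurdle is exactly the standard fact the paper invokes, that $X_1(t)$ and $S^{t-1}_1,S^{t-1}_2$ are functions of $(W_1,\alpha^{t-1})$, which is independent of $(\alpha(t),\ldots,\alpha(n))$.
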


\begin{proof} 
For time $t$ where $1 \leq t \leq n$, we have
\begin{align}
& H\left( Y_2(t) | Y_2^{t-1}, W_2, \alpha^t \right) \nonumber \\
& \overset{(a)}= H\left( Y_2(t) | Y_2^{t-1}, X_2^t, W_2, \alpha^t \right) \nonumber \\
& = H\left( \alpha_{21}(t)X_1(t) | Y_2^{t-1}, X_2^t, W_2, \alpha^t \right) \nonumber \\
& \overset{(b)}= H\left( \alpha_{21}(t)X_1(t) | Y_2^{t-1}, W_2, \alpha^t \right) \nonumber \\
& \overset{(c)}= p H\left( X_1(t) | Y_2^{t-1}, \alpha_{21}(t) = 1, W_2, \alpha^{t-1} \right) \nonumber \\
& \overset{(d)}= p H\left( X_1(t) | Y_2^{t-1}, W_2, \alpha^t \right) \nonumber \\
& \overset{(e)}\geq p H\left( X_1(t) | Y_1^{t-1},Y_2^{t-1}, W_2, \alpha^t \right) \nonumber \\
& \overset{(f)}= \frac{p}{2p-pq\rho_{\sf Tx}-p^2} H\left( Y_1(t), Y_2(t) | Y_1^{t-1},Y_2^{t-1}, W_2, \alpha^t \right) \nonumber \\
& = \frac{1}{1 + (1 - \rho_{\sf Tx}) (1 - p)} H\left( Y_1(t), Y_2(t) | Y_1^{t-1},Y_2^{t-1}, W_2, \alpha^t \right) \nonumber \\
& \overset{(g)}= \frac{1}{\beta\left( p, \rho_{\sf Tx} \right)} H\left( Y_1(t), Y_2(t) | Y_1^{t-1},Y_2^{t-1}, W_2, \alpha^t \right),
\end{align}
where $(a)$ holds since $X_2^t$ is a function of $\left( W_2, \alpha^t \right)$ and the correlation coefficients; $(b)$ is true since given $\left( W_2, \alpha^t \right)$, $\alpha_{21}(t)X_1(t)$ and $X_2^t$ are independent; $(c)$ follows the fact that $\alpha_{21}(t)$ is in the binary field and $\Pr\left( \alpha_{21}(t) = 1 \right) = p$; $(d)$ holds since $X(t)$ is independent of the channel realization at time $t$; $(e)$ follows from the fact that conditioning reduces entropy; $(f)$ holds since the probability that at least one of the channel gains connected to ${\sf Tx}_1$ is equal to $1$ is one minus the probability that both links are zero, and is given by
\begin{align}
1 - p^{\sf Tx}_{00} = 2p-pq\rho_{\sf Tx}-p^2;
\end{align}
$(g)$ follows the definition of $\beta\left( p, \rho_{\sf Tx} \right)$ given in (\ref{Eq:Beta}) and mentioned above. Therefore, we have 
\begin{align}
& \sum_{t=1}^n{H\left( Y_2(t) | Y_2^{t-1}, W_2, \alpha^t \right)} \\
& \geq \frac{1}{\beta\left( p, \rho_{\sf Tx} \right)} \sum_{t=1}^n{H\left( Y_1(t), Y_2(t) | Y_1^{t-1}, Y_2^{t-1}, W_2, \alpha^t \right)}, \nonumber
\end{align}
and since the transmit signal at time $t$ are independent from future channel realizations, we get
\begin{align}
& \sum_{t=1}^n{H\left( Y_2(t) | Y_2^{t-1}, W_2, \alpha^n \right)} \\
& \geq \frac{1}{\beta\left( p, \rho_{\sf Tx} \right)} \sum_{t=1}^n{H\left( Y_1(t), Y_2(t) | Y_1^{t-1}, Y_2^{t-1}, W_2, \alpha^n \right)}, \nonumber
\end{align}
from which we have
\begin{align}
H\left( Y_2^n | W_2, \alpha^n \right) & \geq \frac{1}{\beta\left( p, \rho_{\sf Tx} \right)} H\left( Y_1^n, Y_2^n | W_2, \alpha^n \right) \nonumber \\
& \geq \frac{1}{\beta\left( p, \rho_{\sf Tx} \right)} H\left( Y_1^n | W_2, \alpha^n \right).
\end{align}
\end{proof}

Using Lemma~\ref{Lemma:EntropyLeakage}, we have
\begin{align}
n &\left( R_1 + \beta\left( p, \rho_{\sf Tx} \right) R_2 \right) \nonumber \\
& = H(W_1) + \beta\left( p, \rho_{\sf Tx} \right) H(W_2) \nonumber \\
& = H(W_1|W_2, \alpha^n) + \beta\left( p, \rho_{\sf Tx} \right) H(W_2|\alpha^n) \nonumber \\
& \overset{(\mathrm{Fano})}\leq I(W_1;Y_1^n|W_2,\alpha^n) + \beta\left( p, \rho_{\sf Tx} \right) I(W_2;Y_2^n|\alpha^n) + n \epsilon_n \nonumber \\
& = H(Y_1^n|W_2,\alpha^n) - \underbrace{H(Y_1^n|W_1,W_2,\alpha^n)}_{=~0} \nonumber \\
& ~+ \beta\left( p, \rho_{\sf Tx} \right) H(Y_2^n|\alpha^n) - \beta\left( p, \rho_{\sf Tx} \right) H(Y_2^n|W_2,\alpha^n) + n \epsilon_n \nonumber \\
& \overset{\textrm{Lemma}~\ref{Lemma:EntropyLeakage}}\leq \beta\left( p, \rho_{\sf Tx} \right) H(Y_2^n|\alpha^n) + n \epsilon_n \nonumber \\
& \leq n \beta\left( p, \rho_{\sf Tx} \right) (1-p^{\sf Rx}_{00}) + \epsilon_n.
\end{align}
where $\epsilon_n \rightarrow 0$ as $n \rightarrow \infty$, and the last step holds since ${\sf Rx}_2$ does not receive any signal $p^{\sf Rx}_{00}$ fraction of the time. Dividing both sides by $n$ and let $n \rightarrow \infty$, we get
\begin{align}
R_1 + \beta\left( p, \rho_{\sf Tx} \right) R_2 \leq \beta\left( p, \rho_{\sf Tx} \right) (1-p^{\sf Rx}_{00}).
\end{align}

\subsection{Packet Networks: An Extremal Rank-Ratio Inequality}
\label{Section:ConversePacket}

In this subsection we derive the outer-bounds on the throughput region of spatially correlated wireless packet networks. The following lemma is the equivalent of the Entropy Leakage Inequality, Lemma~\ref{Lemma:EntropyLeakage}, for wireless packet networks.

\begin{lemma}[Extremal Rank-Ratio Inequality for Spatially Correlated Channels]
\label{Lemma:Leakage}
For spatially correlated interference packet networks described in Section~\ref{Section:Problem} and for a non-zero $p \in \mathcal{S}_{\rho_{\sf Tx}} \cap \mathcal{S}_{\rho_{\sf Rx}}$, we have
\begin{align}
\mathbb{E}\left[ \mathrm{rank}\left[ \mathbf{G}_{21}^n \mathbf{V}_{1}^n \right] \right] \geq \frac{1}{\beta\left( p, \rho_{\sf Tx} \right)} \mathbb{E}\left[ \mathrm{rank}\left[ \mathbf{G}_{11}^n \mathbf{V}_{1}^n \right] \right].
\end{align}
where
\begin{align}
\beta\left( p, \rho_{\sf Tx} \right) = 1 + \left( 1 - \rho_{\sf Tx} \right) \left( 1 - p \right).
\end{align}
\end{lemma}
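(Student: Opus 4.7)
The plan is to mirror the proof of Lemma~\ref{Lemma:EntropyLeakage} at the level of subspace dimensions, with a stacked-rank quantity playing the role of the joint entropy $H(Y_1^n, Y_2^n \mid W_2, \alpha^n)$. Set $D_2(t) = \mathrm{rank}(\mathbf{G}_{21}^t \mathbf{V}_1^t)$ and $D_{12}(t) = \mathrm{rank}[\mathbf{G}_{11}^t \mathbf{V}_1^t;\, \mathbf{G}_{21}^t \mathbf{V}_1^t]$, the latter obtained by stacking rows. Since $D_{12}(n) \geq \mathrm{rank}(\mathbf{G}_{11}^n \mathbf{V}_1^n)$, the lemma will follow once I prove
\[
\mathbb{E}[D_2(n)] \geq \tfrac{1}{\beta(p,\rho_{\sf Tx})}\, \mathbb{E}[D_{12}(n)].
\]
The slope $1/\beta(p,\rho_{\sf Tx})$ will emerge from the arithmetic identity $p/(1-p^{\sf Tx}_{00}) = 1/\beta(p,\rho_{\sf Tx})$, exactly as in the entropy proof.

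The core of the argument is a per-time-step increment analysis. Condition on $\mathcal{H}_{t-1} = (\alpha^{t-1}, \mathbf{V}_1^t)$, which determines $\vec{v}_1(t)$ because the precoding vector depends only on delayed CSI. Assuming the gains $g_{j1}(t)$ are nonzero whenever $\alpha_{j1}(t) = 1$ (the standard genericity convention underlying the packet-network model), the rank of $\mathbf{G}_{j1}^t \mathbf{V}_1^t$ equals the dimension of $\mathrm{span}\{\vec{v}_1(s) : s \leq t,\, \alpha_{j1}(s) = 1\}$, so
\begin{align*}
D_2(t) - D_2(t-1) &= \alpha_{21}(t) \cdot \mathbb{1}\!\left[\vec{v}_1(t) \notin U_2(t-1)\right], \\
D_{12}(t) - D_{12}(t-1) &= \mathbb{1}[\alpha_{11}(t) \vee \alpha_{21}(t) = 1] \cdot \mathbb{1}\!\left[\vec{v}_1(t) \notin U_{12}(t-1)\right],
\end{align*}
where $U_2(t-1)$ and $U_{12}(t-1)$ denote the spans of the corresponding subsets of $\{\vec{v}_1(s)\}_{s<t}$. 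Since $\alpha_{11}(t)$ and $\alpha_{21}(t)$ are independent of $\mathcal{H}_{t-1}$, taking conditional expectations yields the factors $\Pr(\alpha_{21}(t) = 1) = p$ and $\Pr(\alpha_{11}(t) \vee \alpha_{21}(t) = 1) = 1 - p^{\sf Tx}_{00}$ in front of the respective indicator events.

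The remainder is elementary: the nesting $U_2(t-1) \subseteq U_{12}(t-1)$ yields $\mathbb{1}[\vec{v}_1(t) \notin U_2(t-1)] \geq \mathbb{1}[\vec{v}_1(t) \notin U_{12}(t-1)]$, so substituting $p = (1-p^{\sf Tx}_{00})/\beta(p,\rho_{\sf Tx})$ gives, per time step,
\[
\mathbb{E}[D_2(t) - D_2(t-1) \mid \mathcal{H}_{t-1}] \geq \tfrac{1}{\beta(p,\rho_{\sf Tx})}\, \mathbb{E}[D_{12}(t) - D_{12}(t-1) \mid \mathcal{H}_{t-1}].
\]
Summing over $t$, applying the tower property, and combining with $D_{12}(n) \geq \mathrm{rank}(\mathbf{G}_{11}^n \mathbf{V}_1^n)$ completes the proof. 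The main obstacle is conceptual: recognizing the stacked matrix $[\mathbf{G}_{11}^t \mathbf{V}_1^t;\, \mathbf{G}_{21}^t \mathbf{V}_1^t]$ as the correct rank-theoretic surrogate for $H(Y_1^n, Y_2^n \mid W_2, \alpha^n)$. Once that substitution is made, the nesting of spans forces the pointwise inequality and the spatial-correlation dependence collapses into $\beta(p,\rho_{\sf Tx})$ through the same arithmetic identity used in Lemma~\ref{Lemma:EntropyLeakage}.
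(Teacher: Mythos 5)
Your proposal is correct and follows essentially the same route as the paper: the per-time-step rank-increment analysis with the stacked matrix $\bigl[\mathbf{G}_{11}^t\mathbf{V}_1^t;\ \mathbf{G}_{21}^t\mathbf{V}_1^t\bigr]$ as the surrogate for the joint entropy, the nesting of rowspans to compare indicators, and the identity $p/(1-p^{\sf Tx}_{00}) = 1/\beta(p,\rho_{\sf Tx})$ are exactly the steps $(a)$--$(g)$ in the paper's proof, merely phrased via conditional expectations and the tower property rather than as a single telescoping chain. Your explicit invocation of the genericity convention ($\alpha_{j1}(t)=1 \Rightarrow g_{j1}(t)\neq 0$) corresponds to the paper's step $(c)$, so nothing is missing.
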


\begin{proof} 
\begin{align}
\mathbb{E} & \left[ \mathrm{rank}\left[ \mathbf{G}_{21}^n \mathbf{V}_{1}^n \right] \right] \nonumber \\
& \overset{(a)}= \mathbb{E}\left[ \sum_{t=1}^n{\mathrm{rank}\left[ \mathbf{G}_{21}^t \mathbf{V}_{1}^t \right] - \mathrm{rank}\left[ \mathbf{G}_{21}^{t-1} \mathbf{V}_{1}^{t-1} \right]} \right] \nonumber \\
& = \mathbb{E}\left[ \sum_{t=1}^n{\mathbf{1}{\left\{ \alpha_{21}(t) g_{21}(t) \vec{v}_{1}(t) \notin \mathrm{rowspan}\left( \mathbf{G}_{21}^{t-1} \mathbf{V}_{1}^{t-1} \right) \right\} }} \right] \nonumber \\
& \overset{(b)}= p\mathbb{E}\left[ \sum_{t=1}^n{\mathbf{1}{\left\{g_{21}(t) \vec{v}_{1}(t) \notin \mathrm{rowspan}\left( \mathbf{G}_{21}^{t-1} \mathbf{V}_{1}^{t-1} \right) \right\} }} \right] \nonumber \\
& \overset{(c)}= p \mathbb{E}\left[ \sum_{t=1}^n{\mathbf{1}{\left\{ \vec{v}_{1}(t)~\notin~\mathrm{rowspan}\left( \mathbf{G}_{21}^{t-1} \mathbf{V}_{1}^{t-1} \right) \right\} }} \right] \nonumber \\
& \overset{(d)}\geq p \mathbb{E}\left[ \sum_{t=1}^n{\mathbf{1}{\left\{ \vec{v}_{1}(t)~\notin~\mathrm{rowspan}\begin{bmatrix} \mathbf{G}_{21}^{t-1} \mathbf{V}_{1}^{t-1} \\ \mathbf{G}_{11}^{t-1} \mathbf{V}_{1}^{t-1} \end{bmatrix}  \right\} }} \right] \nonumber \\
& \overset{(e)}= \frac{p}{2p-pq\rho_{\sf Tx}-p^2} \times \nonumber \\
& ~~\mathbb{E}\left[ \sum_{t=1}^n{ \mathrm{rank} \begin{bmatrix} \mathbf{G}_{21}^t \mathbf{V}_{1}^t \\ \mathbf{G}_{11}^t \mathbf{V}_{1}^t \end{bmatrix} - \mathrm{rank} \begin{bmatrix} \mathbf{G}_{21}^{t-1} \mathbf{V}_{1}^{t-1} \\ \mathbf{G}_{11}^{t-1} \mathbf{V}_{1}^{t-1} \end{bmatrix} } \right] \nonumber 
\end{align}
\begin{align}
& \overset{(f)}= \frac{1}{1 + \left( 1 - \rho_{\sf Tx} \right) \left( 1 - p \right)} \mathbb{E}\left[ \mathrm{rank} \begin{bmatrix} \mathbf{G}_{21}^n \mathbf{V}_{1}^n \\ \mathbf{G}_{11}^n \mathbf{V}_{1}^n \end{bmatrix}\right] \nonumber \\
& \overset{(g)}\geq \frac{1}{\beta\left( p, \rho_{\sf Tx} \right)} \mathbb{E}\left[ \mathrm{rank} \begin{bmatrix} \mathbf{G}_{21}^n \mathbf{V}_{1}^n \\ \mathbf{G}_{11}^n \mathbf{V}_{1}^n \end{bmatrix}\right] \nonumber \\
& \geq \frac{1}{\beta\left( p, \rho_{\sf Tx} \right)} \mathbb{E}\left[ \mathrm{rank}\left[ \mathbf{G}_{11}^n \mathbf{V}_{1}^n \right] \right],
\end{align}
where $(a)$ follows the definition of the $\mathrm{rank}$ function; $(b)$ holds since $\alpha_{21}(t)$ follows a Bernoulli $\mathcal{B}\left( p \right)$ distribution; $(c)$ holds since if $g_{21}(t) = 0$, then $\alpha_{21}(t) = 0$; $(d)$ follows the fact that adding $\mathbf{G}_{11}^{t-1} \mathbf{V}_{1}^{t-1}$ enlarges the $\mathrm{rowspan}$; $(e)$ holds since the probability that at least one of the shadowing coefficients associated with ${\sf Tx}_1$ is equal to $1$ is given by
\begin{align}
1 - p^{\sf Tx}_{00} = 2p-pq\rho_{\sf Tx}-p^2;
\end{align}
$(f)$ holds since $p \neq 0$; $(g)$ follows the definition of $\beta\left( p, \rho_{\sf Tx} \right)$.
\end{proof}

To derive the outer-bounds, we again mimic the proof for the finite-field model. We have
\begin{align}
\label{Eq:OuterBound}
n & \left( R_1 + \beta\left( p, \rho_{\sf Tx} \right) R_2 \right) \nonumber \\
& \overset{a.s.}=  \mathbb{E}\left[ \mathrm{dim}\left( \mathrm{Proj}_{\mathcal{I}_1^c}~\mathrm{colspan}\left( \mathbf{G}_{11}^n \mathbf{V}_{1}^n \right) \right) \right] \nonumber \\
& ~~+ \beta\left( p, \rho_{\sf Tx} \right) \mathbb{E}\left[ \mathrm{dim}\left( \mathrm{Proj}_{\mathcal{I}_2^c}~\mathrm{colspan}\left( \mathbf{G}_{22}^n \mathbf{V}_{2}^n \right) \right) \right] \nonumber \\
& \overset{(a)}\leq \mathbb{E}\left[ \mathrm{rank}\left[ \mathbf{G}_{11}^n \mathbf{V}_{1}^n \right] \right] \nonumber \\
& ~~+ \beta\left( p, \rho_{\sf Tx} \right) \mathbb{E}\left[ \mathrm{rank}\left[ \mathbf{G}_{21}^n \mathbf{V}_{1}^n + \mathbf{G}_{22}^n \mathbf{V}_{2}^n \right] \right] \nonumber \\
& ~~- \beta\left( p, \rho_{\sf Tx} \right) \mathbb{E}\left[ \mathrm{rank}\left[ \mathbf{G}_{21}^n \mathbf{V}_{1}^n \right] \right] \nonumber \\
& \overset{\mathrm{Lemma~\ref{Lemma:Leakage}}}\leq \beta\left( p, \rho_{\sf Tx} \right) \mathbb{E}\left[ \mathrm{rank}\left[ \mathbf{G}_{21}^n \mathbf{V}_{1}^n + \mathbf{G}_{22}^n \mathbf{V}_{2}^n \right] \right] \nonumber \\
& \overset{(b)}\leq n \beta\left( p, \rho_{\sf Tx} \right) (1-p^{\sf Rx}_{00}), 
\end{align}
where the first equality is needed to guarantee (\ref{eq:decodability}) holds and each receiver can decode its intended packets; $(a)$ follows since we ignored interference at ${\sf Rx}_1$; $(b)$ holds since ${\sf Rx}_2$ does not receive any useful information $p^{\sf Rx}_{00}$ fraction of the time.

Dividing both sides of (\ref{Eq:OuterBound}) by $n$ and let $n \rightarrow \infty$, we get
\begin{align}
\label{Eq:SumRateBound}
R_1 + \beta\left( p, \rho_{\sf Tx} \right) R_2 \leq \beta\left( p, \rho_{\sf Tx} \right) (1-p^{\sf Rx}_{00}).
\end{align}
Similarly, we can obtain 
\begin{align}
\beta\left( p, \rho_{\sf Tx} \right) R_1 + R_2 \leq \beta\left( p, \rho_{\sf Tx} \right) (1-p^{\sf Rx}_{00}).
\end{align}

This completes the converse proof of Theorem~\ref{THM:CapacityCorrelated} and in the following section, we present our communication protocol.

\begin{remark}
The outer-bound obtained in (\ref{Eq:SumRateBound}) is not always active. For instance as discussed in Section~\ref{Section:Main} and depicted in Fig.~\ref{Fig:region-receiver}, for $p = 0.5$, $\rho_{\sf Tx} = 0$, and $\rho_{\sf Rx} \in \left[ -1, -1/3 \right]$, the throughput region is determined by the bounds on individual rates, \emph{i.e.} $0 \leq R_1, R_2 \leq 0.5$.
\end{remark}

\begin{figure}[ht]
\centering
\subfigure[]{\includegraphics[height = 1.3in]{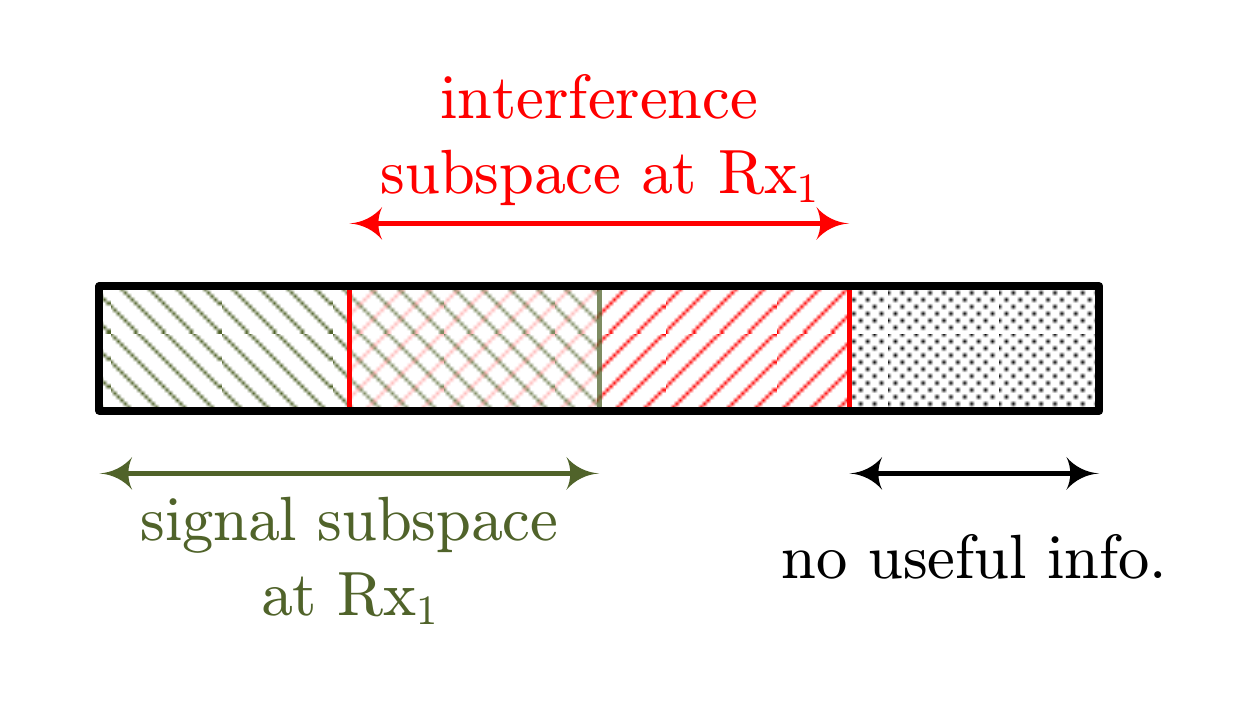}}
\subfigure[]{\includegraphics[height = 1.3in]{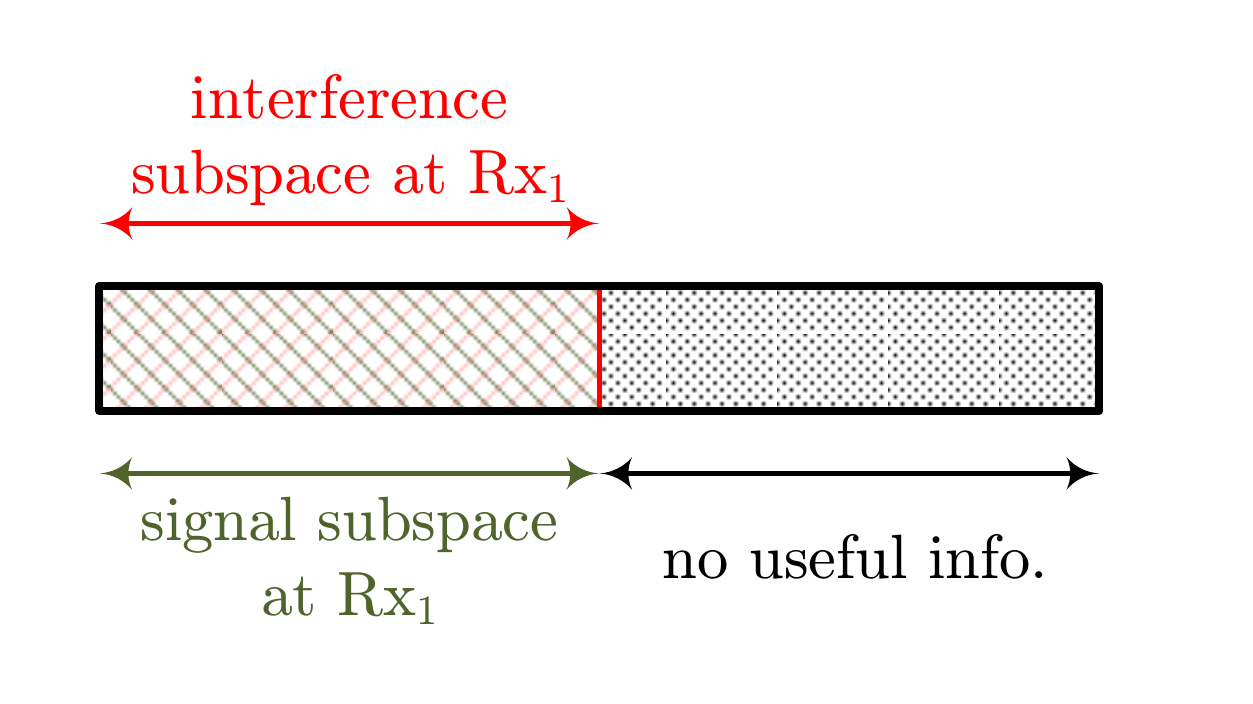}}
\subfigure[]{\includegraphics[height = 1.3in]{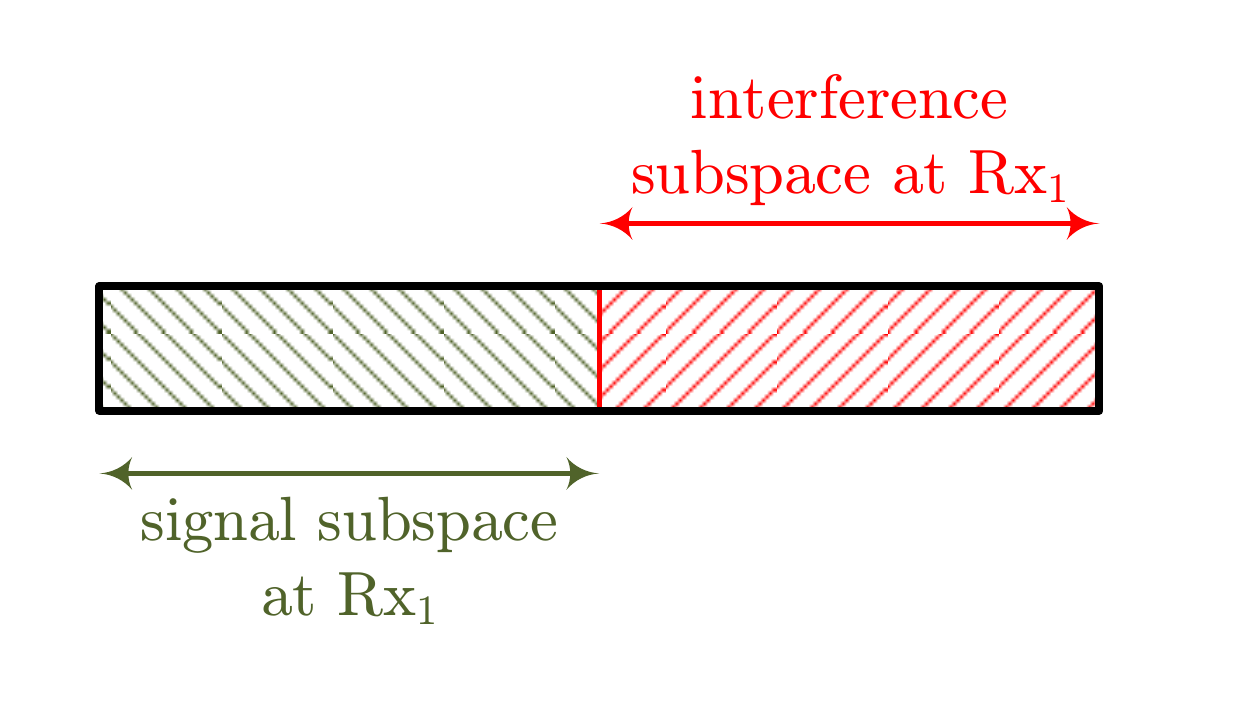}}
\caption{The available subspace at receiver ${\sf Rx}_1$ for $p = 0.5$ and: (a) $\rho_{\sf Rx} = 0$; (b) $\rho_{\sf Rx} = 1$; (c) $\rho_{\sf Rx} = -1$. A negative $\rho_{\sf Rx}$ pushes the signal subspace and the interference subspace away from each other.}\label{Fig:Subspace}
\end{figure}

The shape of the region is defined by $\beta\left( p, \rho_{\sf Tx} \right)$ which is a function of spatial correlation at the transmitters. Spatial correlation at the receivers plays an important role as well. A positive $\rho_{\sf Rx}$ pulls the signal subspace and the interference subspace closer to each other while a negative $\rho_{\sf Rx}$ pushes them away from each other. Fig.~\ref{Fig:Subspace} illustrates the available subspace at receiver ${\sf Rx}_1$ for $p = 0.5$ and $\rho_{\sf Rx} = -1, 0, 1$. For independent links at receivers $\rho_{\sf Rx} = 0$ as in Fig.~\ref{Fig:Subspace}(a), the signal subspace and the interference subspace overlap and $1/4$ of the available subspace to receiver ${\sf Rx}_1$ does not provide any useful information (this is when $\alpha_{11}(t) = \alpha_{21}(t) = 0$). When the links connected to ${\sf Rx}_1$ are fully correlated, \emph{i.e.} $\rho_{\sf Rx} = 1$, the signal subspace and the interference subspace completely overlap and  $1/2$ of the available subspace to receiver ${\sf Rx}_1$ does not provide any useful information as illustrated in Fig.~\ref{Fig:Subspace}(b). Finally when $\rho_{\sf Rx} = -1$ as in Fig.~\ref{Fig:Subspace}(c),  the signal subspace and the interference subspace are completely disjoint and the entire available subspace to receiver ${\sf Rx}_1$ is utilized. In this latter case since interference lies in a separate subspace from the signal, the task of interference management becomes simpler and the network throughput improves.


\section{Achievability Proof of Theorem~\ref{THM:CapacityCorrelated}: Communication Protocols for Spatially Correlated Channels}
\label{Section:Achievability}

In this section we describe an opportunistic communication protocol (or achievability strategy) that matches the outer-bounds, thus completing the proof of Theorem~\ref{THM:CapacityCorrelated}. We present this protocol for packet networks. The corresponding protocol for the finite-field model can be easily recovered by setting all continuous channel gains, $g_{ji}(t)$'s, equal to $1$. 


We first present two examples to clarify the key ideas of the communication protocol. These examples demonstrate the impact of (positive and negative) spatial correlation at both the receiver side and the transmitter side. As we mentioned in Remark~\ref{Remark:Family}, fixing $p, \rho_{\sf Tx},$ and $\rho_{\sf Rx}$ does not specify a unique channel but rather a family of channels. An interesting aspect of our communication protocol is that it only relies on the correlation coefficients rather than specific channel realizations and works for all channels within a given family. 

The throughput region for spatially independent channels ($\rho_{\sf Tx} = 0$ and $\rho_{\sf Rx} = 0$) is given in~\cite{AlirezaBFICDelayed}. Our communication protocol is the generalization of that result to spatially correlated channels. As we describe the protocol for the two examples, we also highlight the differences between our protocol and the communication protocol of~\cite{AlirezaBFICDelayed}. Once the required modifications are described and justified, same changes can be applied to for any choice of $p, \rho_{\sf Tx},$ and $\rho_{\sf Rx}$ as discussed at the end of this section.

\subsection{Example~1:~Communication Protocol for $p = \rho_{\sf Tx} = \rho_{\sf Rx} = 0.5$}
\label{Section:Example1}

In this example we assume that the wireless links both at the transmitters and at the receivers are positively correlated, and we present the achievability strategy for the maximum symmetric sum-rate point given by
\begin{align}
\label{Eq:MaxSumRate}
\left. \frac{2\beta\left( p, \rho_{\sf Tx} \right)\left( 1 - p^{\sf Rx}_{00} \right)}{1+\beta\left( p, \rho_{\sf Tx} \right)} \right|_{p=\rho_{\sf Tx}=\rho_{\sf Rx} =0.5}= \frac{25}{36}.
\end{align}

Suppose each transmitter wishes to communicate $m$ packets to its intended receiver. It suffices to show that this task can be accomplished (with vanishing error probability as $m \rightarrow \infty$) in
\begin{align}
\label{Eq:TotalTime}
& \left. \frac{1+\beta\left( p, \rho_{\sf Tx} \right)}{\beta\left( p, \rho_{\sf Tx} \right)\left( 1 - p^{\sf Rx}_{00} \right)}\right|_{p=\rho_{\sf Tx}=\rho_{\sf Rx} =0.5} m + \mathcal{O}\left( m^{\frac{2}{3}} \right) \nonumber \\
& ~= \frac{72}{25} m + \mathcal{O}\left( m^{\frac{2}{3}} \right)
\end{align}
time instants. The protocol is divided into two phases described below.

\begin{figure*}[ht]
\centering
\includegraphics[height = 3in]{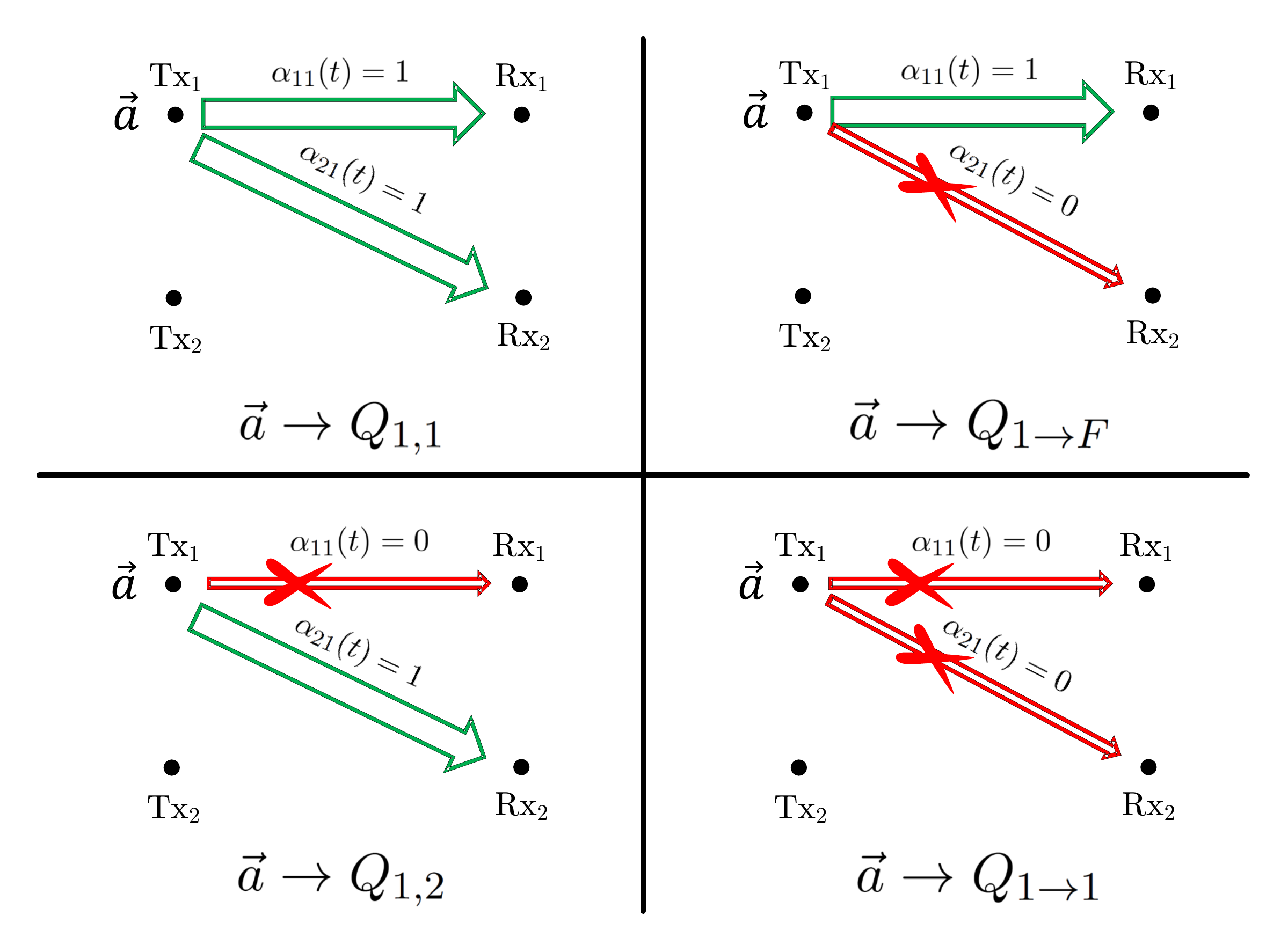}
\caption{Based on the shadowing coefficients at the time of communication, the status of packet $\vec{a}$ is updated. We note that the shadowing coefficients are learned with unit delay.\label{Fig:TableFig}}
\end{figure*}

\vspace{1.5mm}

\noindent {\bf Phase 1}: At the beginning of the communication block, we assume that the $m$ packets at ${\sf Tx}_i$ are in a queue\footnote{We assume that the queues are column vectors and packets are placed in each queue according to the order they join the queue.} denoted by $Q_{i \rightarrow i}$, $i=1,2$. At each time instant $t$, ${\sf Tx}_i$ transmits a packet from $Q_{i \rightarrow i}$ and this packet will either stay in this initial queue or will transition to one of the queues listed in Fig.~\ref{Fig:TableFig} for ${\sf Tx}_1$ and $\vec{a}$. If at time instant $t$, $Q_{i \rightarrow i}$ is empty, then ${\sf Tx}_i$, $i=1,2$, remains silent until the end of Phase 1.
\begin{enumerate}
\item [(A)] $Q_{i \rightarrow F}$: The packets for which no retransmission is required and thus we consider delivered. It is worth mentioning that any interference from ${\sf Tx}_{\bar{i}}$ at ${\sf Rx}_i$ will be resolved by that transmitter as we describe later;

\item [(B)] $Q_{i, 1}$: The packets for which at the time of communication both shadowing coefficients of the links connected to ${\sf Tx}_i$ were equal to $1$;

\item [(C)] $Q_{i, 2}$: The packets for which at the time of communication we have $\alpha_{ii}(t) = 0$ and $\alpha_{\bar{i}i}(t) = 1$.
\end{enumerate}

For $p = 0.5$ and $\rho_{\sf Tx} = \rho_{\sf Rx} = 0.5$, we have
\begin{align}
& p^{\sf Tx}_{11} = p^{\sf Rx}_{11} = \frac{3}{8}, \nonumber \\
& p^{\sf Tx}_{01} = p^{\sf Tx}_{10} = p^{\sf Rx}_{01} = p^{\sf Rx}_{10} \frac{1}{8}, \nonumber \\
& p^{\sf Tx}_{00} = p^{\sf Rx}_{00} = \frac{3}{8},
\end{align}
where the notation is defined in (\ref{Eq:RangeTx}) and (\ref{Eq:RangeRx}). Phase~$1$ continues for 
\begin{align}
\frac{1}{1-p^{\sf Tx}_{00}} m + m^{\frac{2}{3}} = \frac{8}{5} m + m^{\frac{2}{3}}
\end{align}
time instants and if at the end of this phase either of the queues $Q_{i \rightarrow i}$ is not empty, we declare error type-I and halt the transmission (for simplicity, we assume $m$ is chosen such that $m^{\frac{2}{3}} \in \mathbb{Z}$). 

\vspace{1.5mm}

\noindent {\bf Comparison to the communication protocol for independent links:} Here, spatial correlation at the transmitter side changes the length of Phase~1. For $\rho_{\sf Tx} = 0$, we have $p^{\sf Tx}_{00} = (1-p)^2 = 1/4$.

Assuming that the transmission is not halted, let $N_{i,1}$ and $N_{i, 2}$ denote the number of packets in queues $Q_{i,1}$ and $Q_{i,2}$ respectively at the end of the first phase, $i=1,2$. The transmission strategy will be halted and error type-II occurs, if any of the following events happens.
\begin{align}
\label{eq:errortypeII}
& N_{i,1} > \mathbb{E}[N_{i,1}] + 2 m^{\frac{2}{3}} \overset{\triangle}= n_{i,1}, \quad i=1,2; \nonumber \\
& N_{i,2} > \mathbb{E}[N_{i,2}] + 2 m^{\frac{2}{3}} \overset{\triangle}= n_{i,2}, \quad i=1,2.
\end{align}

From basic probability, we have
\begin{align}
\label{eq:expectedvalues}
\mathbb{E}[N_{i,1}] = \frac{p^{\sf Tx}_{11}m}{1 - p^{\sf Tx}_{00}} = \frac{3}{5}m,~\mathbb{E}[N_{i,2}] = \frac{p^{\sf Tx}_{01}m}{1 - p^{\sf Tx}_{00}} = \frac{1}{5}m.
\end{align}
At the end of Phase $1$, we add deterministic packets (if necessary) in order to make queues $Q_{i,1}$ and $Q_{i,2}$ of size equal to $n_{i,1}$ and $n_{i,2}$ respectively as given above, $i=1,2$. 

Statistically a fraction
\begin{align}
\frac{p^{\sf Rx}_{01}}{p} = \frac{1}{4}
\end{align}
of the packets in $Q_{i,1}$ and the same fraction of the bits in $Q_{i,2}$ are known to ${\sf Rx}_{\bar{i}}$, $i=1,2$. Denote the number of bits in $Q_{i,j}$ known to ${\sf Rx}_{\bar{i}}$ by
\begin{align}
N_{i,j|{\sf Rx}_{\bar{i}}}, \qquad i,j \in \{ 1, 2 \}.
\end{align}
At the end of communication, if we have
\begin{align}
\label{Eq:KnownBits}
N_{i,j|{\sf Rx}_{\bar{i}}} < \frac{p^{\sf Rx}_{01}}{p} n_{i,j} - 2 m^{\frac{2}{3}} = \frac{1}{4} n_{i,j} - 2 m^{\frac{2}{3}}, \qquad i,j \in \{ 1, 2 \},
\end{align}
we declare error type-III. 

Using Bernstein inequality~\cite{bernstein1924modification}, we can show that the probability of errors of types I, II, and III decreases exponentially and approaches zero as $m \rightarrow \infty$\footnote{In simple terms this inequality states that if $X_1,\ldots,X_r$ are $r$ independent random variables, and $M=\sum_{i=1}^r{X_i}$, then $Pr\left[ |M-\mathbb{E}\left[ M \right] | > \alpha \right] \leq 2 \exp \left( \frac{-\alpha^2}{4 \sum_{i=1}^r \mathrm{Var} \left( X_i \right)} \right)$.}. In fact, throughout this section, we intentionally picked $m^{\frac{2}{3}}$ to add to the constants in order to guarantee that the error terms vanish as $m$ increases. For instance, to bound the probability of error type-I, we have
\begin{align}
& \Pr \left[ \mathrm{error~type \noindent - \noindent I} \right] \overset{\text{Union~Bound}}\leq  \sum_{i=1}^2{\Pr \left[ Q_{i \rightarrow i} \mathrm{~is~not~empty} \right]} \nonumber \\
&~= 4 \exp \left( \frac{-m^{4/3}}{4 \left( 1 - \left( p^{\sf Tx}_{00} \right)^2 \right) \left( p^{\sf Tx}_{00} \right)^2 \left[ \frac{1}{1-p^{\sf Tx}_{00}} m + m^{\frac{2}{3}} \right]} \right),
\end{align}
which decreases exponentially to zero as $m \rightarrow \infty$. We refer the reader to~\cite{AlirezaBFICDelayed} for a more detailed discussion on using Bernstein inequality (and Chernoff-Hoeffding bound~\cite{Chernoff,Hoeffding}) to analyze the error probabilities. For the rest of this subsection, we assume that Phase~1 is completed and no error has occurred.

\vspace{1.5mm}

\noindent {\bf Intuition}: At this point it is worth describing the logic behind the following phase. The value of wireless is simultaneous communication to many nodes. Thus, the goal is to create linear combinations of packets that are of interest to \emph{both} receivers. 

\vspace{1.5mm}

\noindent {\bf Phase 2}: During this phase we deliver sufficient number of linearly independent combinations of packets to each receiver so that they can decode their corresponding packets. A linear combination of packets is the bitwise modulo summation of the corresponding codewords in the binary field. We create these linear combinations in such a way that they are of interest to both receivers. Denote by $Q^c_{1,1}$ and $Q^c_{1,2}$ the fraction of the packets in $Q_{1,1}$ and $Q_{1,2}$ respectively for which at the time of transmission $\alpha_{22}(t) = 1$, and by $Q^{nc}_{1,1}$ and $Q^{nc}_{1,2}$ the fraction of the packets in $Q_{1,1}$ and $Q_{1,2}$ respectively for which at the time of transmission $\alpha_{22}(t) = 0$. Similar definitions apply to $Q^c_{2,1}, Q^c_{2,2}, Q^{nc}_{2,1}$ and $Q^{nc}_{2,2}$. We have
\begin{align}
\label{eq:expectedvalues2}
\mathbb{E}[N^c_{i,1}] =  \frac{9}{20}m,~\mathbb{E}[N^c_{i,2}] =  \frac{3}{20}m,\quad i=1,2.
\end{align}
  
\begin{figure}[ht]
\centering
\subfigure[]{\includegraphics[height = 1.5in]{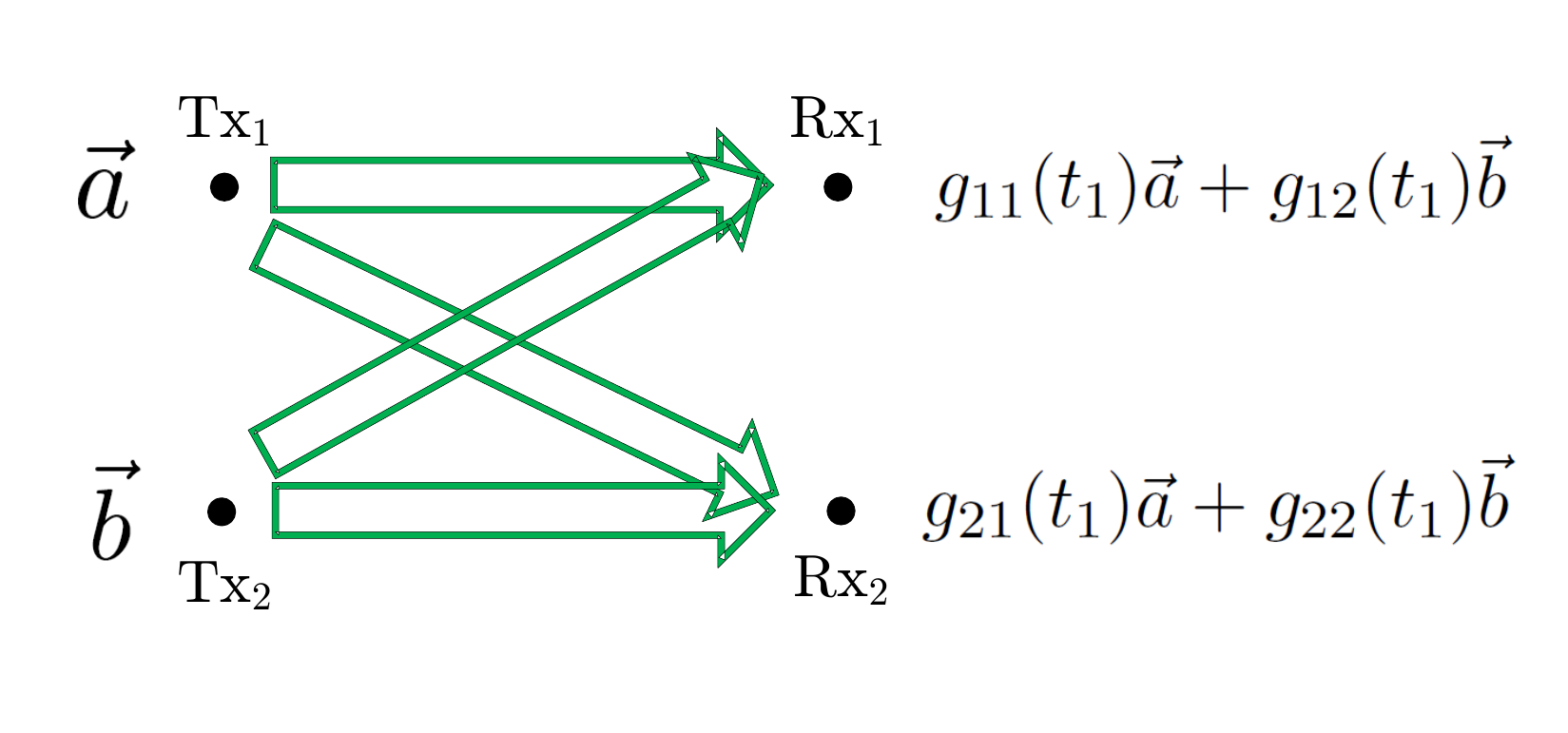}}
\hspace{.5in}
\subfigure[]{\includegraphics[height = 1.5in]{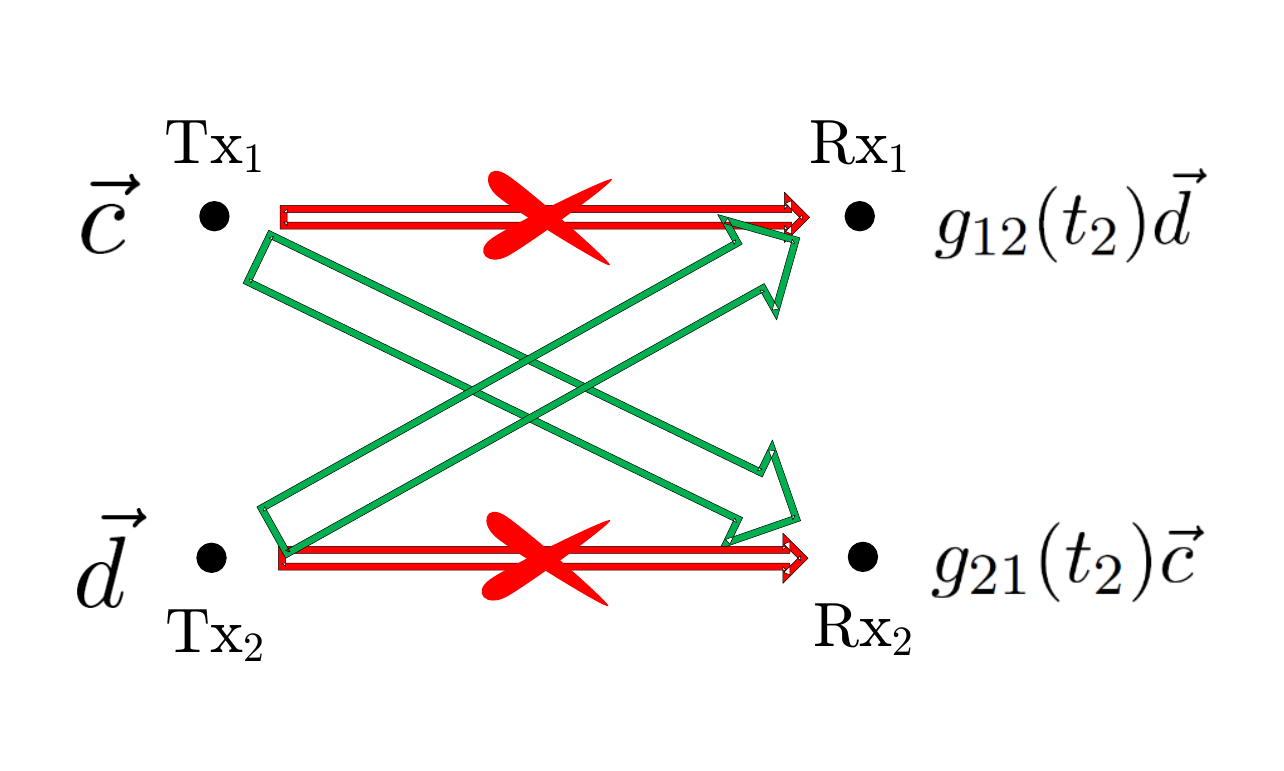}}
\caption{Packets in $Q^{nc}_{i,2}$ can be combined with packets in $Q^{c}_{i,1}$ to create packets of common interest: $\vec{a} + \vec{c}$ and $\vec{b} + \vec{d}$ are of common interest to both receivers.}\label{Fig:Combine1}
\end{figure}	
		
Packets in $Q^{nc}_{i,2}$ can be combined with packets in $Q^{c}_{i,1}$ to create packets of common interest as depicted in Fig.~\ref{Fig:Combine1}. Packets in $Q^{c}_{i,2}$ are needed at both receivers as depicted in Fig.~\ref{Fig:Combine2} (no combination in this case). Finally, the packets in $Q^{c}_{i,1}$ minus the ones combined with packets in $Q^{nc}_{i,2}$ are needed at both receivers, \emph{e.g.}, packet $\vec{a}$ in Fig.~\ref{Fig:Combine1}(a) is useful for both receivers. However for these remaining packets in $Q^{c}_{i,1}$ and $Q^{c}_{\bar{i},1}$ only half of them need to be provided to both receivers. As a result, the expected number of total linearly independent combinations needed at both receivers is given by:
\begin{align}
2 \times \underbrace{\frac{3}{20}m}_{Q^{c}_{i,2}} + 2 \times \underbrace{\frac{1}{20}m}_{Q^{nc}_{i,2}~\text{combined~w/~}Q^{c}_{i,1}} + \underbrace{\frac{8}{20}m}_{\text{remaining~in~}Q^{c}_{i,1}} = \frac{4}{5} m.
\end{align}

\begin{figure}[ht]
\centering
\includegraphics[height = 1.5in]{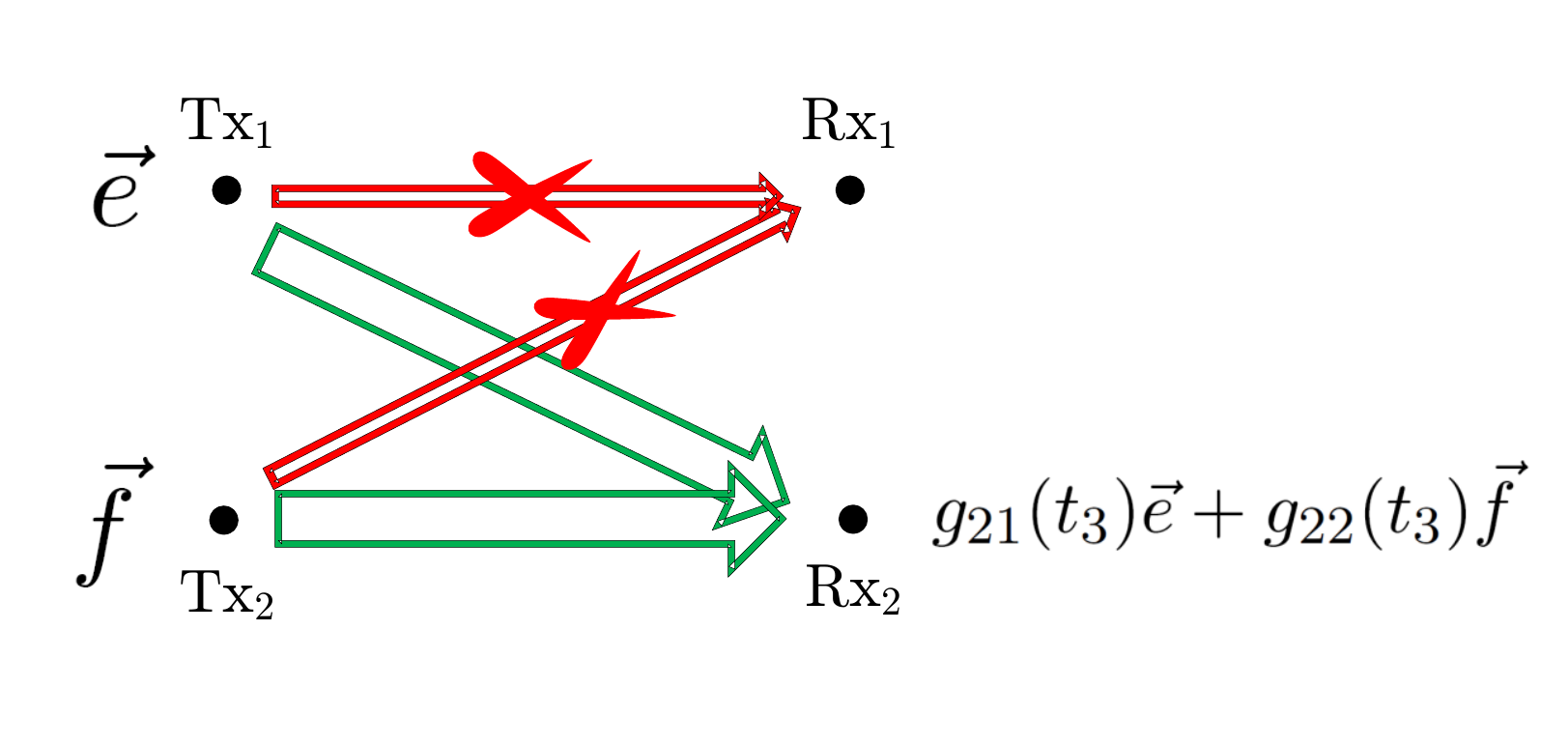}
\caption{Packets in $Q^{c}_{i,2}$ are needed at both receivers. For example, packet $\vec{e}$ is needed at ${\sf Rx}_1$ and helps ${\sf Rx}_2$ to remove interference and decode its intended packet $\vec{f}$.}\label{Fig:Combine2}
\end{figure}

\noindent {\bf Comparison to the communication protocol for independent links:} If we keep $\rho_{\sf Tx} = 0.5$ but change $\rho_{\sf Rx}$ to $0$ (\emph{i.e.} independent links at the receivers), the expected number of equations needed at both receivers reduces to $3m/5$. The reason is that with $\rho_{\sf Rx} = 0.5$ more packets are in $Q^c_{1,1}$ and $Q^c_{1,2}$ which limits the opportunity to combine packets. In other words, positive spatial correlation at receivers pushes the signal subspace and the interference subspace closer to each other which results in requiring more information to resolve interference. 

We conclude that ${\sf Tx}_i$ only needs to create
\begin{align}
\frac{2}{5} m + \mathcal{O}\left(m^{\frac{2}{3}}\right)
\end{align}
linearly independent equations of the packets in $Q_{i,1}$ and $Q_{i,2}$, $i=1,2$, and deliver them to \emph{both} receivers. This problem of delivering packets to both receivers resembles a network with two transmitters and two receivers where each transmitter ${\sf Tx}_i$ wishes to communicate an independent message $\hbox{W}_i$ to {\it both} receivers as depicted in Fig.~\ref{fig:two-multicast}, $i=1,2$. The channel gain model is the same as described in Section~\ref{Section:Problem}. We refer to this problem as the two-multicast problem. It is a straightforward exercise to show that for this problem, a rate-tuple of 
\begin{align}
\label{Eq:MulticastRate}
\left( R_1, R_2 \right) = \left( \frac{\left( 1- p^{\sf Rx}_{00} \right)}{2}, \frac{\left( 1- p^{\sf Rx}_{00} \right)}{2} \right) = \left( \frac{5}{16}, \frac{5}{16} \right)
\end{align} 
is achievable, see~\cite{AlirezaBFICDelayed}.

\vspace{1.5mm}

\noindent {\bf Comparison to the communication protocol for independent links:} For $\rho_{\sf Rx} = 0$ (\emph{i.e.} independent links at the receivers), the two-multicast problem with $p=0.5$ has a maximum achievable sum-rate of $3/4$ compared to $5/8$ for $\rho_{\sf Rx} = 0.5$. 

\begin{figure}[ht]
\centering
\includegraphics[height = 3.5 cm]{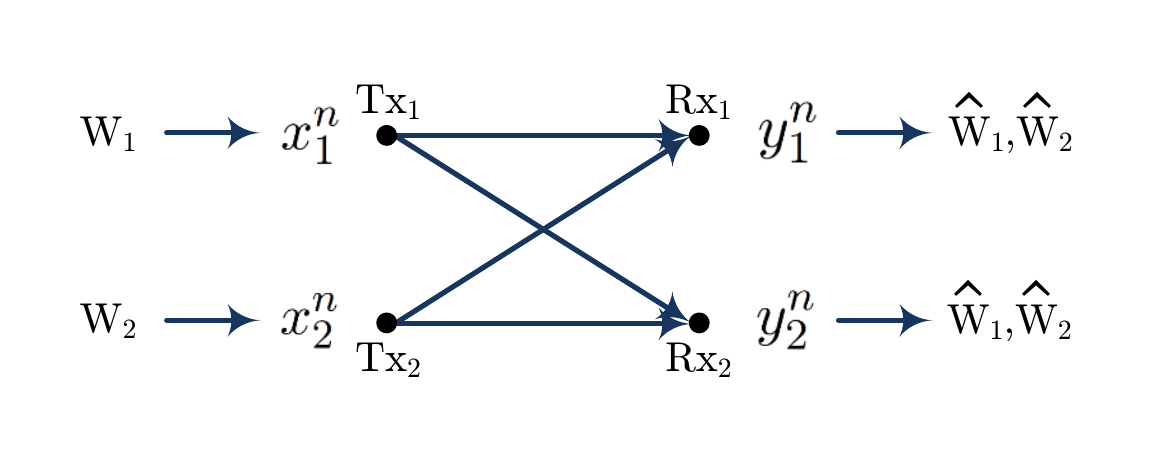}
\caption{Two-multicast network: ${\sf Tx}_i$ wishes to reliably communicate message $\hbox{W}_i$ to both receivers, $i=1,2$.\label{fig:two-multicast}}
\end{figure}

Using the communication protocol of the two-multicast problem, Phase~2 lasts for 
\begin{align}
\underbrace{\frac{8}{5}}_{\text{one~over~sum-rate~of~\ref{Eq:MulticastRate}}} \times \underbrace{\frac{4}{5}m}_{\text{\#~of~eqs~needed}} + \mathcal{O}\left(m^{\frac{2}{3}}\right) = \frac{32}{25} m + \mathcal{O}\left(m^{\frac{2}{3}}\right)
\end{align} 
time instants. If any error takes place, we declare error and terminate the communication. Thus to continue, we assume that the transmission is successful and no error has occurred.  

\vspace{1.5mm}

\noindent {\bf Decoding}: The idea is that upon completion of the communication protocol each receiver has sufficient number of linearly independent equations of the desired packets to decode them. As we described in detail, we designed the scheme in such way to guarantee this happens. 

\vspace{1.5mm}

\noindent {\bf Achievable rates}: Assuming that no error occurs, the total communication time is then equal to the length of Phases~1 and~2. Thus asymptotically, the total communication time is:
\begin{align}
\underbrace{\frac{8}{5} m}_{\text{Phase~1}} + \underbrace{\frac{32}{25} m}_{\text{Phase~2}} + \mathcal{O}\left(m^{\frac{2}{3}}\right) = \frac{72}{25} m + \mathcal{O}\left(m^{\frac{2}{3}}\right)
\end{align}
time instants which is what we expected as given in~(\ref{Eq:TotalTime}). This completes the communication protocol for $p = 0.5$ and $\rho_{\sf Tx} = \rho_{\sf Rx} = 0.5$.

\begin{remark}[Binary vs. Larger Finite Fields]
\label{Remark:BinaryField}
In this work we focus on the binary coefficients and the reason is as follows. As we saw in this example, for the transmission strategy what matters is whether or not transmit signal $X_i(t)$ contributes to the receiver observation $Y_j(t)$. For instance, in Fig.~\ref{Fig:Combine2} the key is that ${\sf Rx}_2$ has a linear combination of $\vec{e}$ and $\vec{f}$, and given this information, the continuous channel gains, $g_{21}(t_3), g_{22}(t_3)$, become irrelevant. In other words, even if we consider a larger finite field for the shadowing coefficients, all that matters is whether each channel gain is zero or not which is essentially a binary random variable. 
\end{remark}

\subsection{Example~2:~Communication Protocol for $p = 0.45,$ $\rho_{\sf Tx} = 0$ and $\rho_{\sf Rx} = -0.75$}

This example focuses on the impact of negative spatial correlation at the receiver side and assumes independent links at the transmitters. For this particular choice of channel parameters, we can achieve
\begin{equation}
\left\{ \begin{array}{ll}
\vspace{1mm} 0 \leq R_1 \leq 0.45, &\\
0 \leq R_2 \leq 0.45. &
\end{array} \right\}
\end{equation}
which is maximum possible throughput regardless of the assumption on the available channel state information (since each individual link has a maximum throughput of $p = 0.45$).

The achievability strategy is divided into three phases  described below compared to the two phases of the previous example.

\vspace{1.5mm}

\noindent {\bf Phase 1}: This phase follows the same steps as the one introduced in the previous subsection. In this example the wireless links connected to each transmitter are uncorrelated and as a result, Phase~$1$ continues for 
\begin{align}
\frac{1}{1-q^2} m + m^{\frac{2}{3}} = \frac{1}{1-.55^2} m + m^{\frac{2}{3}}  \left(\approx 1.4337 m + m^{\frac{2}{3}}\right)
\end{align}
time instants\footnote{Numbers are rounded in this example.}, and if at the end of this phase either of the queues $Q_{i \rightarrow i}$ is not empty, we declare error type-I and halt the transmission. 

\vspace{1.5mm}

\noindent {\bf Phase 2}: Similar to the previous example, the goal of this phase is to combine packets and use multicasting to deliver them. However, the large negative correlation at the receiver side results in fewer opportunities to combine packets. Denote by $Q^c_{1,1}$ and $Q^c_{1,2}$ the fraction of the packets in $Q_{1,1}$ and $Q_{1,2}$ respectively for which at the time of transmission $\alpha_{22}(t) = 1$, and by $Q^{nc}_{1,1}$ and $Q^{nc}_{1,2}$ the fraction of the packets in $Q_{1,1}$ and $Q_{1,2}$ respectively for which at the time of transmission $\alpha_{22}(t) = 0$. In this example $Q^c_{1,1}$ only contains a small fraction of the packets in $Q_{1,1}$. More precisely, only a fraction $p_{11}^{\sf Rx}/\left( p_{10}^{\sf Rx} + p_{11}^{\sf Rx}\right) \approx 0.0376$ of the packets in $Q_{1,1}$ are placed in $Q^c_{1,1}$. We combine the packets and retransmit them similar to the previous example, and a large fraction of packets will wait for Phase~3.

\vspace{1.5mm}

\noindent {\bf Phase 3}: In this phase the packets left in $Q^{nc}_{1,1}$ and $Q^{nc}_{1,2}$ are retransmitted. Since these packets are already available at the unintended receivers as shown in Fig.~\ref{Fig:Combine1}(b), they no longer create any interference when retransmitted and thus they can be delivered at a rate of $p = 0.45$ from each transmitter to its intended receiver. 

\vspace{1.5mm}

\noindent {\bf Achievable rates}: We skip decoding for this example and we just calculate the overall throughput. The total communication time is equal to:
\begin{align}
& \underbrace{\frac{1}{1-q^2} m}_{\text{Phase~1}} + \underbrace{\frac{p p_{11}^{\sf Rx}}{p_{10}^{\sf Rx} + p_{11}^{\sf Rx}} \left( 1 + p + \frac{p p_{11}^{\sf Rx}}{p_{10}^{\sf Rx} + p_{11}^{\sf Rx}} \right) m}_{\text{Phase~2}} \nonumber \\
& ~+  \underbrace{\frac{qm}{1-q^2} - \left( 1 - \frac{p_{11}^{\sf Rx}}{p_{10}^{\sf Rx} + p_{11}^{\sf Rx}} \right) m}_{\text{Phase~3}} + \mathcal{O}\left(m^{\frac{2}{3}}\right) \nonumber \\
& =  \frac{20}{9} m + \mathcal{O}\left(m^{\frac{2}{3}}\right),
\end{align}
which immediately implies that an overall throughput of $2p = 0.9$ is achievable in this example ($p = 0.45,$ $\rho_{\sf Tx} = 0$ and $\rho_{\sf Rx} = -0.75$).

\vspace{1.5mm}

\noindent {\bf Comparison to the communication protocol for independent links:} As mentioned before, a negative $\rho_{\sf Rx}$ pushes the signal subspace and the interference subspace away from each other. In this example we have $\rho_{\sf Rx} = -0.75$ and as a result, most of the packets that arrive at the unintended receiver do not cause any interference. Since these packets are known at the unintended receiver, retransmission can be accomplished at the maximum point-to-point rate of $p = 0.45$. In this example we could not create many packets of common interest which sounds discouraging. However, since the signal and the interference subspaces are nicely separated, we could achieve the maximum possible rates. 

\subsection{Communication Protocol for General $p,\rho_{\sf Tx}$ and $\rho_{\sf Rx}$}


The communication protocol for spatially independent links, \emph{i.e.} $\rho_{\sf Tx} = 0$ and $\rho_{\sf Rx} = 0$, is presented in~\cite{AlirezaBFICDelayed} and in essence, our communication protocol builds upon that result with careful modifications to incorporate spatial correlation. Intuitively, spatial correlation at the transmitters and at the receivers affect the communication protocol in the following ways:
\begin{enumerate}
\item Spatial correlation at the transmitters ($\rho_{\sf Tx}$) affects the length of Phase~1: positive correlation increases the length of Phase~1 whereas negative correlation decreases it. Moreover, $\rho_{\sf Tx}$ changes the distribution of packets in the queues of Fig.~\ref{Fig:TableFig}: positive correlation increases the number of packets in $Q_{1,1}$ whereas negative correlation increases the number of packets in $Q_{1,2}$. The size of these queues define the mixing procedure of the following communication phases;  

\item Spatial correlation at the receivers ($\rho_{\sf Rx}$) affects the number of interfering packets:  positive correlation increases the number of packets in $Q^c_{1,1}$ and $Q^c_{1,2}$ whereas negative correlation decreases this number. In other words, a positive $\rho_{\sf Rx}$ pushes the signal subspace and the interference subspace closer to each other while a negative $\rho_{\sf Rx}$ pushes them away from each other. Moreover, $\rho_{\sf Rx}$ directly affects the throughput of the two-multicast problem of Fig.~\ref{fig:two-multicast}: positive correlation decreases the throughput whereas negative correlation increases the throughput.
\end{enumerate}

Now that we understand intuitively how spatial correlation affects the communication protocol, consider a spatially correlated network with $p, \rho_{\sf Tx},$ and $\rho_{\sf Rx}$. In general, at each time the network can be in any of the $16$ possible realizations where state $k$ may occur with probability $0 \leq p_{{\sf S}_k} \leq 1$, $\sum_{k=1}^{16}{p_{{\sf S}_k}} = 1$, $k=1,2,\ldots,16$. In the two examples we presented above, the communication protocol at each transmitter relies on the delayed knowledge of the links connected to it and the statistics of the links connected to the other transmitter, see Fig.~\ref{Fig:TableFig}. But since at each time transmitters know all the past channel realizations, rather than relying on statistics we can use this knowledge to form the queues as described below. 
\begin{itemize}
\item $Q^c_{1,1}$: A packet from transmitter ${\sf Tx}_1$ moves to queue  $Q^c_{1, 1}$ if at the time of transmission the channel realization is ${\sf S}_k$ for $k \in \{ 1,2 \}$. For ${\sf Tx}_2$ and queue  $Q^c_{2, 1}$ this transition happens when $k \in \{ 1,3 \}$;

\item $Q^{nc}_{1,1}$: A packet from transmitter ${\sf Tx}_1$ moves to queue  $Q^{nc}_{1, 1}$ if at the time of transmission the channel realization is ${\sf S}_k$ for $k \in \{ 6,8 \}$. For ${\sf Tx}_2$ and queue  $Q^{nc}_{2, 1}$ this transition happens when $k \in \{ 10, 12 \}$;

\item $Q^c_{i,2}$: The status of a packet from transmitter ${\sf Tx}_1$ is updated to $Q^c_{i,2}$ if at the time of transmission the channel realization is ${\sf S}_k$ for $k \in \{ 11,12  \}$. For ${\sf Tx}_2$ and $Q^c_{2,2}$ the transition happens if $k \in \{ 7,8 \}$;

\item $Q^{nc}_{i,2}$: The status of a packet from transmitter ${\sf Tx}_1$ is updated to $Q^c_{i,2}$ if at the time of transmission the channel realization is ${\sf S}_k$ for $k \in \{ 14,15  \}$. For ${\sf Tx}_2$ and $Q^c_{2,2}$ the transition happens if $k \in \{ 13,15 \}$.


\end{itemize}
In the two examples we provided in this section, we relied on the statistics of the wireless links to form $Q^c_{i,1}, Q^{nc}_{i,1}, Q^c_{i,2},$ and $Q^{nc}_{i,2}$ but since global CSIT is available, we can actually perform this task by looking at the specific channel realizations as we did above. Once we have formed these queues, we combine them according to the following principles.

Packets in $Q^{nc}_{i,2}$ can be combined with packets in $Q^{c}_{i,1}$ to create packets of common interest as depicted in Fig.~\ref{Fig:Combine1}. Packets in $Q^{c}_{i,2}$ are needed at both receivers as depicted in Fig.~\ref{Fig:Combine2} (no combination in this case). If after combining the packets $Q^{c}_{i,1}$ is not empty, such packets are useful for both receivers, \emph{e.g.}, packet $\vec{a}$ in Fig.~\ref{Fig:Combine1}(a). Moreover, for the remaining packets in $Q^{c}_{i,1}$ and $Q^{c}_{\bar{i},1}$ only half of them need to be provided to both receivers. We communicated the combined packets and half the remaining packets in $Q^{c}_{i,1}$ using the transmission strategy of the two-multicast problem depicted in Fig.~\ref{fig:two-multicast}. On the other hand, suppose after combining the packets queues $Q^{nc}_{1,1}$ and $Q^{nc}_{1,2}$ are not empty. Such packets are already available at the unintended receivers as shown in Fig.~\ref{Fig:Combine1}(b) and they no longer create any interference when retransmitted and thus they can be delivered at a rate of $p$ from each transmitter to its intended receiver. The other corner points of the capacity region can be achieved applying the same modification principles to the protocol presented in~\cite{AlirezaBFICDelayed}.


\section{Connection to Other Problems}
\label{Section:Discussion}

In this section we describe the connections between our results and two other well-known problems.

\subsection{Wireless Networks with Alternating Topology}

Wireless networks with alternating topology were studied in~\cite{AlirezaBFICDelayed} where each link could be on or off with some probability and capacity results were obtained for the case of instantaneous and delayed channel state information at the transmitters. Other results have considered the Degrees of Freedom of networks with alternating topology as well~\cite{sun2013topological}.

The shadowing coefficients $\alpha_{ij}(t)$'s in our work are designed to capture the relative strength of different signals. However, these coefficients can be interpreted as connectivity coefficients and thus, our work is closely related to the results on networks with alternating topology. Moreover since we limit the precoding to be linear, our results are related to the study of the linear DoF of wireless networks~\cite{lashgari2014linear,kao2017linear,lashgari2014blind,yang2017linear,mukherjee2017secure}.

\subsection{Wireless Packet Networks}

In the context of wireless packet networks it is known that when collision occurs, the receiver can still store its analog received signal and utilize it for decoding the packets in the future. This can be done via a variety of techniques studied in multi-user joint detection and interference cancellation for cellular networks (\emph{e.g.},~\cite{boutros2002iterative,verdu1998multiuser,reynolds2002blind,el2000iterative,poor2004iterative}). ZigZag decoding~\cite{gollakota2008zigzag} also demonstrates that interference decoding and successive interference cancellation can be accomplished in 802.11 MAC. ZigZag decoding is carried on from the perspective of one receiver and takes advantage of asynchronous communication. In contrast, our work considers a network with multiple receivers and the goal is to maximize the overall network throughput. This is solely done through the knowledge of past interference pattern and no data communication between the transmitters happen. Finally, we assumed that packets are perfectly synchronized and this takes away the gain of ZigZag decoding from our problem.


\section{Conclusion}
\label{Section:Conclusion}

We studied the throughput region of spatially correlated interference packet networks. We quantified the impact of spatial correlation at the transmitters by deriving an extremal rank-ratio inequality. To achieve the outer-bounds, we developed a new transmission protocol that could have as many as three phases of communication. Our communication protocol takes full advantage of delayed interference structure knowledge and the available spatial correlation side information.  We learned that spatial correlation at the transmitters defines the shape of the throughput region while spatial correlation at the receivers defines the size of this region. 

In this work we assumed that each transmitter starts with a large number of packets to communicate. An interesting direction is to study and to characterize the throughput region for stochastic packet arrival with potential delivery deadlines. The stability region of the problem is an interesting direction and authors in~\cite{pan2016stability} did so for the results of~\cite{AlirezaInfocom2014}.

%
%
%

\bibliographystyle{ieeetr}
\bibliography{bib_ARQSpatial}

\begin{IEEEbiography} [{\includegraphics[width=25mm]{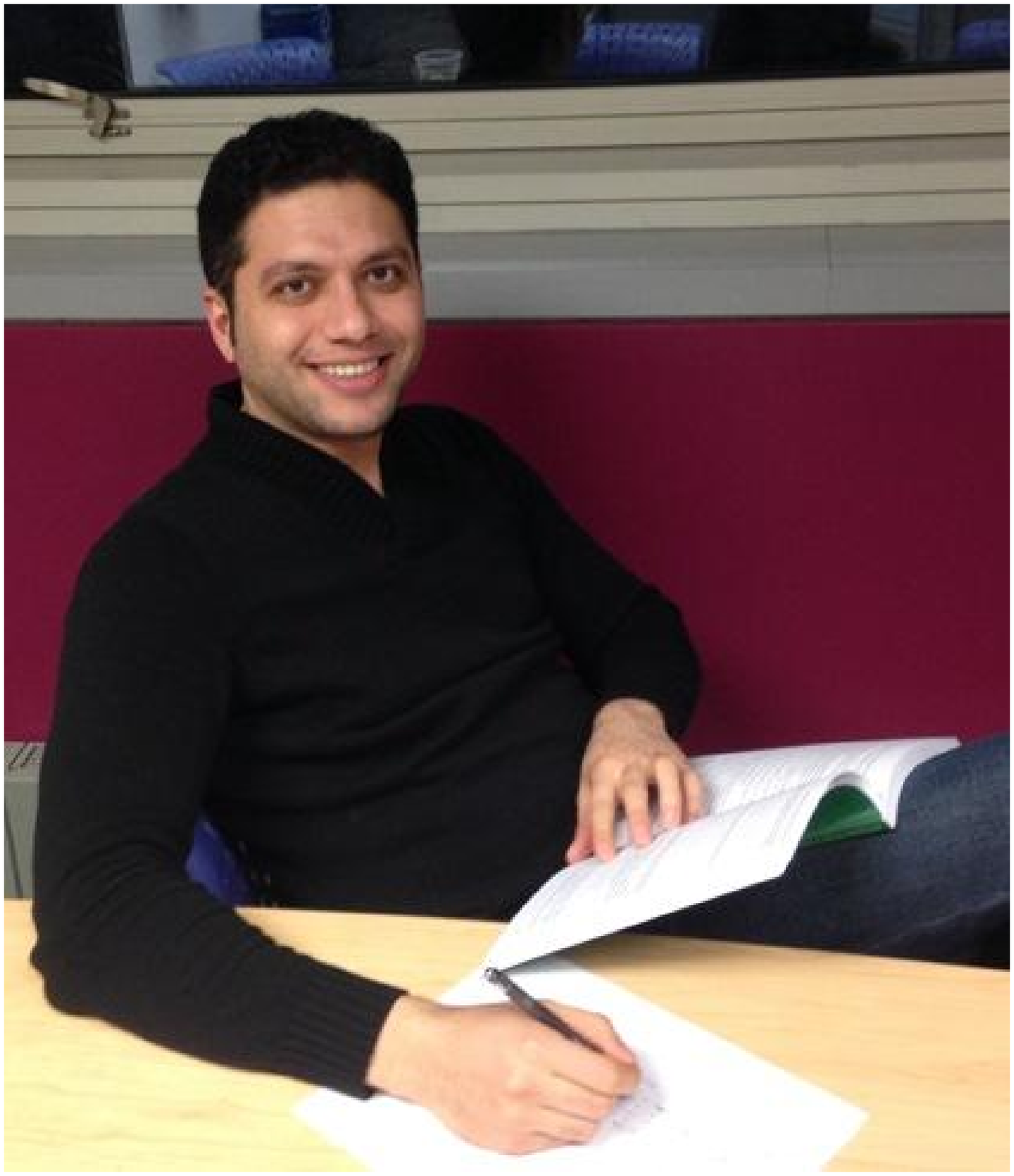}}] {Alireza Vahid} received the B.Sc. degree in electrical engineering from Sharif University of Technology, Tehran, Iran, in 2009, and the M.Sc. degree and Ph.D. degree in electrical and computer engineering both from Cornell University, Ithaca, NY, in 2012 and 2015 respectively. From 2015 to 2017 he worked as a postdoctoral scholar at the Information Initiative at Duke University, Durham, NC. He is currently an assistant professor of electrical engineering at the University of Colorado, Denver. His research interests include network information theory, wireless communications, statistics and machine learning.

Dr. Vahid received the 2015 Outstanding PhD Thesis Research Award at Cornell University. He also received the Director's Ph.D. Teaching Assistant Award in 2010, Jacobs Scholar Fellowship in 2009, and Qualcomm Innovation Fellowship in 2013.
\end{IEEEbiography}

\begin{IEEEbiography} [{\includegraphics[height=32mm]{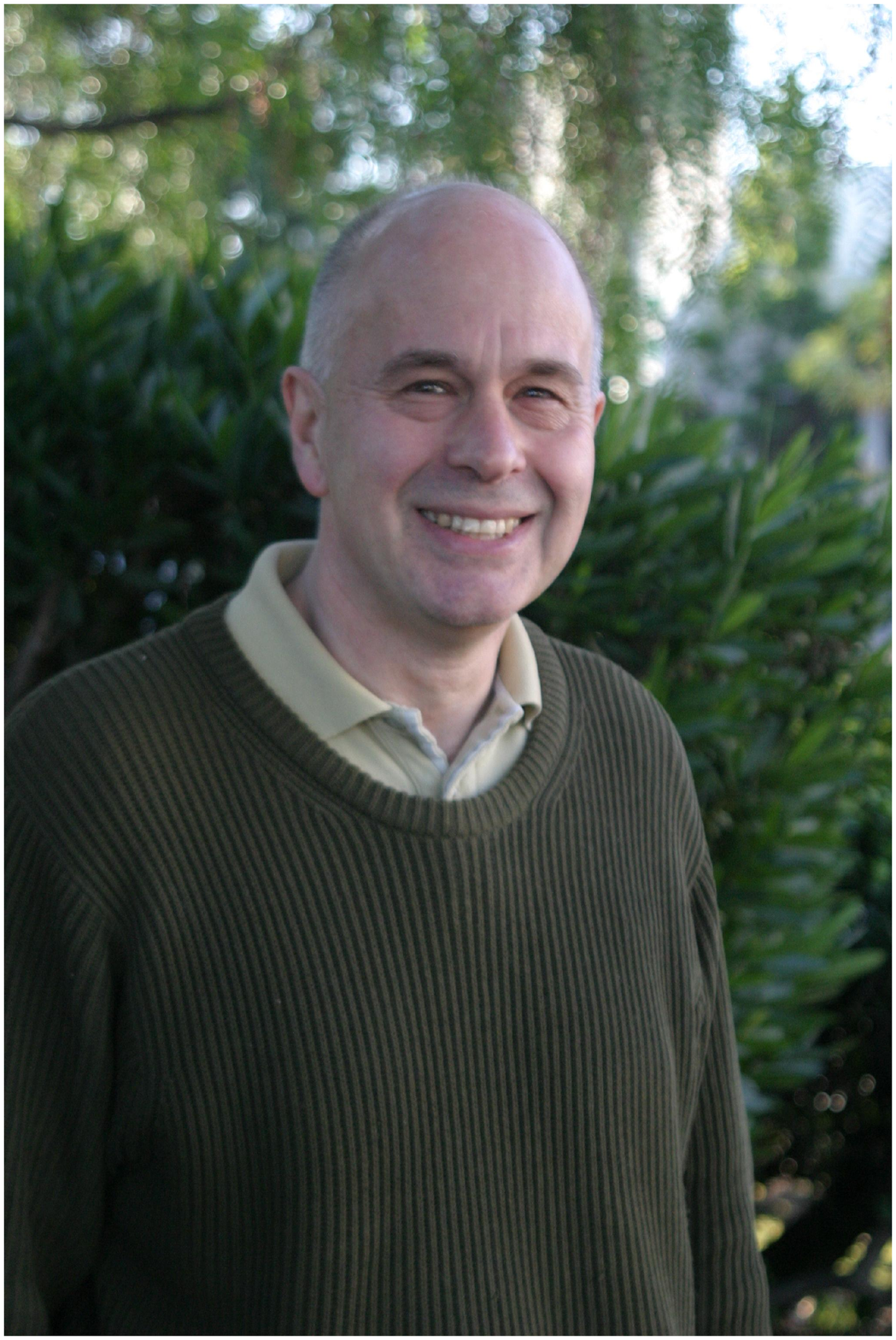}}] {Robert Calderbank} (M’89 – SM’97 – F’98) received the BSc degree in 1975 from Warwick University, England, the MSc degree in 1976 from Oxford University, England, and the PhD degree in 1980 from the California Institute of Technology, all in mathematics.

Dr. Calderbank is Professor of Electrical Engineering at Duke University where he now directs the Information Initiative at Duke (\emph{i}iD) after serving as Dean of Natural Sciences (2010-2013). Dr. Calderbank was previously Professor of Electrical Engineering and Mathematics at Princeton University where he directed the Program in Applied and Computational Mathematics. Prior to joining Princeton in 2004, he was Vice President for Research at AT\&T, responsible for directing the first industrial research lab in the world where the primary focus is data at scale. At the start of his career at Bell Labs, innovations by Dr. Calderbank were incorporated in a progression of voiceband modem standards that moved communications practice close to the Shannon limit. Together with Peter Shor and colleagues at AT\&T Labs he showed that good quantum error correcting codes exist and developed the group theoretic framework for quantum error correction. He is a co-inventor of space-time codes for wireless communication, where correlation of signals across different transmit antennas is the key to reliable transmission.

Dr. Calderbank served as Editor in Chief of the IEEE TRANSACTIONS ON INFORMATION THEORY from 1995 to 1998, and as Associate Editor for Coding Techniques from 1986 to 1989. He was a member of the Board of Governors of the IEEE Information Theory Society from 1991 to 1996 and from 2006 to 2008. Dr. Calderbank was honored by the IEEE Information Theory Prize Paper Award in 1995 for his work on the $Z_4$ linearity of Kerdock and Preparata Codes (joint with A.R. Hammons Jr., P.V. Kumar, N.J.A. Sloane, and P. Sole), and again in 1999 for the invention of space-time codes (joint with V. Tarokh and N. Seshadri). He has received the 2006 IEEE Donald G. Fink Prize Paper Award, the IEEE Millennium Medal, the 2013 IEEE Richard W. Hamming Medal, the 2015 Shannon Award, and he was elected to the US National Academy of Engineering in 2005.

\end{IEEEbiography}

\end{document}